\newcommand{\Q}[0]{\text{Q}} 
\renewcommand{\P}[0]{\text{P}} 
\newcommand{\Popt}[0]{\P_{\texttt{opt}}}
\newcommand{\Qavg}[0]{\overline{\text{Q}}} 
\newcommand{\Qwc}[0]{\text{Q}^+} 
\newcommand{\Qwcmax}[0]{\text{Q}^+_{\texttt{max}}} 
\newcommand{\PAE}[0]{\P_{\texttt{AE}}} 
\newcommand{\PCE}[0]{\P_{\texttt{CE}}} 
\newcommand{\PWCCE}[0]{\P_{\texttt{WC-CE}}} 
\newcommand{\PWCAE}[0]{\P_{\texttt{WC-AE}}} 
\newcommand{\PGI}[0]{\P_{\texttt{GI}}} 
\newcommand{\dP}[1]{d_{P}(#1)} 
\newcommand{\dQ}[1]{d_{Q}(#1)} 
\newcommand{\Xalph}[0]{\mathcal{X}} 
\newcommand{\Zalph}[0]{\mathcal{Z}} 
\newcommand{\Xestalph}[0]{\hat{\mathcal{X}}} 
\newcommand{\RR}[0]{\mathbb{R}^2} 
\newcommand{\fdiscrete}[2]{p(#1|#2)}
\newcommand{\FQL}[1]{\mathcal{F}_{#1}}
\newcommand{\Fopt}[1]{\mathcal{F}^{\texttt{opt}}_{#1}}
\newcommand{\fcoin}[0]{f_{\texttt{coin}}}
\newcommand{\Coin}[0]{\texttt{Coin}}
\newcommand{\Lap}[0]{\texttt{Lap}}
\newcommand{\Cir}[0]{\texttt{Cir}}
\newcommand{\Gau}[0]{\texttt{Gau}}
\newcommand{\BA}[0]{\texttt{ExPost}}
\newcommand{\Exp}[0]{\texttt{Exp}}
\begin{document}

\title{Back to the Drawing Board: Revisiting the Design of Optimal Location Privacy-preserving Mechanisms}
\renewcommand{\shorttitle}{Revisiting the Design of Optimal Location Privacy-preserving Mechanisms}

\author{Simon Oya}
\affiliation{%
  \institution{University of Vigo}
  \streetaddress{}
  \city{} 
  \country{} 
  \postcode{}
}
\email{simonoya@gts.uvigo.es}

\author{Carmela Troncoso}
\affiliation{%
  \institution{IMDEA Software Institute}
  \streetaddress{}
  \city{} 
  \country{} 
  \postcode{}
}
\email{carmela.troncoso@imdea.org}

\author{Fernando P{\'e}rez-Gonz{\'a}lez}
\affiliation{%
  \institution{University of Vigo}
  \streetaddress{}
  \city{} 
  \country{}}
\email{fperez@gts.uvigo.es}

\begin{abstract}
In the last years we have witnessed the appearance of a variety of strategies to design optimal location privacy-preserving mechanisms, in terms of maximizing the adversary's expected error with respect to the users' whereabouts. In this work, we take a closer look at the defenses created by these strategies and show that, even though they are indeed optimal in terms of adversary's correctness, not all of them offer the same protection when looking at other dimensions of privacy. To avoid ``bad'' choices, we argue that the search for optimal mechanisms must be guided by complementary criteria. We provide two example auxiliary metrics that help in this regard: the conditional entropy, that captures an information-theoretic aspect of the problem; and the worst-case quality loss, that ensures that the output of the mechanism always provides a minimum utility to the users. We describe a new mechanism that maximizes the conditional entropy and is optimal in terms of average adversary error, and compare its performance with previously proposed optimal mechanisms using two real datasets. Our empirical results confirm that no mechanism fares well on every privacy criteria simultaneously, making apparent the need for considering multiple privacy dimensions to have a good understanding of the privacy protection a mechanism provides.
\end{abstract}

\begin{CCSXML}
<ccs2012>
<concept>
<concept_id>10002978.10002991.10002995</concept_id>
<concept_desc>Security and privacy~Privacy-preserving protocols</concept_desc>
<concept_significance>500</concept_significance>
</concept>
<concept>
<concept_id>10003033.10003099.10003101</concept_id>
<concept_desc>Networks~Location based services</concept_desc>
<concept_significance>300</concept_significance>
</concept>
</ccs2012>
\end{CCSXML}

\ccsdesc[500]{Security and privacy~Privacy-preserving protocols}
\ccsdesc[300]{Networks~Location based services}

\keywords{Location Privacy; Mechanism Design; Mechanism Evaluation; Quantifying Privacy}

\maketitle

\section{Introduction}

Location based services raise important privacy concerns regarding the private information that exposing accurate location to service providers reveals~\cite{FreudigerSH12,GambsKC11,GolleP09,Krumm07,ZhengZXM09}. To protect users' privacy, the academic community has proposed a wide variety of location privacy-preserving mechanisms~\cite{BeresfordS04,GedikL05,GruteserG03,HohG05,KidoYS05,LuJY08,MeyerowitzC09,WangXHZLX12,YouPL07} that mostly work altering the users' actual location before exposing it to the service provider. The privacy evaluation of these proposals typically does not consider a strategic adversary, fostering an arms race in which defenses and attacks succeed each other without ever providing clear location privacy guarantees. To counter this effect, recent efforts focus on cutting the arms race short by either embedding the adversarial knowledge on the design process~\cite{Opt2012CCS,OptGeoInd2014CCS,Shokri15}, or providing guarantees independent of the adversary's prior~\cite{GeoInd2013CCS,OptGeoInd2014CCS,Shokri15}.
 
In this paper, we focus on sporadic user-centric protection mechanisms based on randomization, which preserve privacy by reporting a noisy version of the real location to the service provider according to a probability distribution. These mechanisms are adequate for applications that require infrequent location exposure, and can be run locally by the user.
In this scenario, approaches that embed the adversarial knowledge on the design process are based on a Bayesian modeling of the adversary~\cite{Quant2011SP}, and find optimal noise-generating mechanisms via linear optimization in which a target privacy objective is sought in presence of utility constraints~\cite{Opt2012CCS}. On the other hand, approaches that provide privacy guarantees independent of the adversary's prior are based on \textit{geo-indistinguishability}~\cite{GeoInd2013CCS}, an adaptation of differential privacy~\cite{Dwork06} to two-dimensional spaces, used by a number of works~\cite{FawazFS15,FawazS14,Ma2014}. Geo-indindistinguishability can be achieved optimally in terms of utility using expensive linear programming~\cite{OptGeoInd2014CCS}, or suboptimally using efficient remapping techniques that increase the utility of the query~\cite{Practical2016}. Finally, the Bayesian and the geo-indistinguishability approaches have been combined by Shokri~\cite{Shokri15} to obtain mechanisms  that guarantee geo-indindistinguishability while achieving a good performance against the Bayesian adversary.

Following the recommendation by Shokri et al.~\cite{Quant2011SP}, which has been taken as the standard by the community, all of these approaches use the adversary's \textit{correctness}, i.e., how close the adversary's estimate is to the correct answer, to evaluate location privacy. Usually, the adversary's correctness is measured as her expected estimation error, where this error is modeled using some distance metric between the real location and the adversary's estimation~\cite{ShokriFJH09}.  

In this paper, we aim at understanding the properties of the mechanisms output by these design strategies. We find that, when the target privacy notion is the adversary's expected estimation error, there are many optimal mechanisms that meet a desired quality loss constraint. While this may seem advantageous, we show that following such an optimization objective may result in the selection of naive mechanisms that obviously provide little privacy, e.g., alternating the exposure of the actual user location and a far away location. Indeed, this mechanism complies on average with the constraints of the problem. Yet, it results on little uncertainty for the adversary, effectively providing a false perception of privacy.

To counter such effect we argue that, depending on the user's preferences, the search for an optimal location privacy-preserving mechanism needs to consider more criteria than the error, contradicting the belief established by Shokri et al.~\cite{Quant2011SP}. As examples of complementary metrics to guide the design of protection mechanisms we propose the use of information-theoretic metrics, e.g., the conditional entropy, or a worst-case bound for quality loss. We provide efficient methods to construct mechanisms with respect to these criteria, and demonstrate that the remapping method introduced in~\cite{Practical2016} to improve the utility of geo-indistinguishability-based methods is in fact a straightforward generic scheme to build an optimal mechanism in terms of the expected estimation error from \textit{any} obfuscation mechanism.
We evaluate the effectiveness of the different mechanisms according to different privacy criteria using two real location datasets concluding that, generally, mechanisms that are optimal for one criterion do not necessarily perform well on others. 

To summarize, we make the following contributions:

\noindent\textbf{\checkmark} We provide a theoretical characterization of optimal location privacy-preserving mechanisms in terms of the mean adversarial error. We show that, for a given average quality loss, there is more than one optimal protection mechanism that maximizes the average privacy. This family of mechanisms forms a convex polytope in which different mechanisms provide different privacy guarantees.

\noindent\textbf{\checkmark} We demonstrate the limitations of evaluating defenses solely considering the correctness of the adversary~\cite{Quant2011SP}, and advocate for the use of complementary criteria to guide the design of location privacy-preserving mechanisms where the privacy guarantees provided are better understood.

\noindent\textbf{\checkmark} We provide algorithms to efficiently design mechanisms based on criteria other than the adversary's error. Furthermore, we demonstrate that remapping, previously proposed as an enhancement to geo-indistin\-guishability, is not only beneficial to improve the utility of this technique but can be used as a generic method to turn any obfuscation mechanism into optimal in terms of average adversarial error.

\noindent\textbf{\checkmark} We evaluate prior and new location privacy-preserving mechanisms on two real location datasets. Our results confirm that it is difficult to find optimal mechanisms that fare well on all criteria. This demonstrates that previous approaches to design location privacy-preserving mechanisms, while having solid foundations, oversimplify the design problem and generate defenses that overestimate the level of privacy offered to the user. 

This paper is organized as follows. In Section~\ref{sec:sysmodel}, we introduce our system model, and the quality loss and privacy metrics we consider in the paper. In Section~\ref{sec:PAEproblem} we study the consequences of choosing the average adversary error as the standard metric to evaluate location privacy, illustrating that mechanisms that are optimal by this criterion may provide little privacy. In Section~\ref{sec:complementary} we propose to consider auxiliary metrics to avoid bad mechanism choices in the optimization. As examples, we study the use of the conditional entropy and the worst-case quality loss. We evaluate several mechanisms built according to these new criteria in Section~\ref{sec:eval}, and offer our conclusions in Section~\ref{sec:conclusions}.

\section{System Model}
\label{sec:sysmodel}

We now describe our system model, which is in agreement with the framework for location privacy proposed by Shokri et al.~\cite{Quant2011SP}, and introduce the notation used throughout the paper, which is summarized in Table~\ref{tab:notation}. 

We consider a set of users that send queries with a geographical position of interest to a location based service to obtain a service (e.g., finding points of interest or nearby friends). The location of interest can be the current location of the user or some other location the user is interested in querying about. Users wish to obtain utility from the location based service, while keeping their whereabouts private from an adversary that can observe the locations in the queries, e.g, an eavesdropper of the user-server communication, or the service provider itself. 
In order to protect their locations, users employ a \emph{location privacy-preserving mechanism} that perturbs their location prior to exposing it to the server. We consider a strategic adversary that knows the protection mechanism operation, and has some knowledge about the users movement patterns. Given the observed perturbed location and her knowledge, the adversary tries to infer the user real location.

We model the locations queried by the users as a discrete set of \emph{points of interest} denoted by $\Xalph\doteq\{x_1,x_2,\cdots,x_N\}$. We refer to these locations as \textit{real} or \emph{input locations} since they are the actual locations that are input to the location privacy-preserving mechanism. We use $\pi(x)$ to denote the \emph{prior} probability that a user in the population queries the service provider about location $x$ ($\pi(x)\geq 0$ and $\sum_{x\in\Xalph} \pi(x)=1$). This prior can either represent the global behavior of all the users as in \cite{Practical2016}, or be tailored to a particular user, but we assume that it is known both by the user and the adversary and that it can be used to design the privacy-preserving mechanism. We also consider independence between queries, i.e., that the input locations $x$ from the same or other users are samples form i.i.d.~random variables given by $\pi$.

The set of possible locations reported by the location privacy-preserving mechanism is denoted by $\Zalph$. We assume that users can report any location in the world $\Zalph=\RR$. We refer to these locations as \emph{output locations}, as they are the outputs of the privacy-preserving mechanism. The mechanism itself is denoted by $f$ and modeled as a set of (continuous) conditional probability distributions, where $f(z|x)$ denotes the probability density function (pdf) of reporting the output location $z\in\RR$ when the real location of the user is $x\in\Xalph$ (note that $f(z|x)\geq 0$ and $\int_{\RR} f(z|x) dz=1$ for all $x\in\Xalph$). We represent discrete mechanisms, i.e., mechanisms with a discrete output domain, in $\RR$ with the Dirac delta function $\delta$. For example, the mechanism that maps any $x\in\Xalph$ to two particular outputs $z_1,z_2\in\RR$ with the same probability would be $f(z|x)=0.5\delta(z-z_1)+0.5\delta(z-z_2)$. For integration purposes, $\delta(z-z')$ must be understood as a two-dimensional Gaussian pdf centered at $z'$ whose variance is arbitrarily small.

When using a privacy-preserving mechanism $f$ to obtain privacy, the user experiences a loss on the quality of service due to the fact that she reports a location that might not be the location of interest, and may even be far away from this one. We use $\P(f,\pi)$ to denote the \emph{privacy} of the user, and $\Q(f,\pi)$ to denote her \emph{quality loss}. We specify particular instantiations of these functions below.

\begin{table}
\begin{center}
\caption{Summary of notation}
\label{tab:notation}
\begin{tabular}{p{.9 cm}l}
  \textbf{Symbol} & \multicolumn{1}{l}{\textbf{Meaning}}   \\
  \toprule
  $x$ & Input location the user is interested in querying about. \\
  $\Xalph$ & Set of valid input locations or points of interest. \\
  $z$ & Output location released by the mechanism, $z\in\RR$. \\
  $\hat{x}$ & Adversary's estimation of the input location, $\hat{x}\in\RR$. \\
  $\pi(x)$ & Prior probability that a user wants to query about $x$. \\
  $f(z|x)$ & Privacy mechanism. Pdf of $z\in\RR$ given $x\in\Xalph$. \\
  $f_Z(z)$ & Pdf of $z$, i.e., $f_Z(z)=\sum_{x\in\Xalph} \pi(x)\cdot f(z|x)$.\\
  $p(x|z)$ & Posterior probability of $x$ given $z$. \\
  \midrule
  $\dQ{x,z}$ & Quality loss distance function between $x$ and $z$. \\
  $\Qavg$ & Average quality loss metric, in \eqref{eq:Qavg}. \\
  $\Qwc$ & Worst-case quality loss metric, in \eqref{eq:Qwc}. \\  
  \midrule
  $\dP{x,\hat{x}}$ & Privacy distance function between $x$ and $\hat{x}$. \\
  $\PAE$ & Average error privacy metric, in \eqref{eq:PAE}. \\
  $\PCE$ & Conditional entropy privacy metric, in \eqref{eq:PCE} \\
  $\PGI$ & Geo-Indistinguishability privacy metric, in \eqref{eq:PGI}
 \end{tabular}
\end{center}
\end{table}

\subsection{Quality Loss Metrics}

We consider two possible definitions of quality loss: the average loss, and the worst-case loss. To this end we introduce $\dQ{x,z}$, a function that quantifies how much quality of service is lost by a user reporting output location $z$ when she is interested in input location $x$. Larger values of $\dQ{x,z}$ indicate a larger loss, and therefore a worse utility performance for the user. The canonical choice for this function is the Euclidean distance: $\dQ{x,z}=||x-z||_2$. Note that $\dQ{\cdot}$ does not need to be a metric in the mathematical sense: it could be any function that maps an input location and a released location to a loss value (e.g.,~a feeling-based utility metric as in~\cite{Bilogrevic2015predicting,Semantics2016PETS}).\\

\noindent{\textbf{Average Loss.}}
The average loss measures how much quality a user loses on average, and can be written as:
 \begin{equation} \label{eq:Qavg}
  \Qavg(f,\pi)=\sum_{x\in\Xalph} \int_{\RR}\pi(x) \cdot f(z|x) \cdot \dQ{x,z} dz\,.
 \end{equation}
This metric has been the typical choice of utility in the related literature \cite{Opt2012CCS,GeoInd2013CCS,OptGeoInd2014CCS,Elastic2015PETS,Practical2016} since it is very intuitive.  This metric also has the advantage of being linear with the mechanism $f$, which is very useful towards reducing the computational cost of mechanism design algorithms. Moreover, it makes the analysis of optimal algorithms in terms of average loss tractable.\\

\noindent{\textbf{Worst-case Loss.}}
Given a function that quantifies the point-wise loss as defined above, $\dQ{x,z}$, the worst-case loss is defined as:
 \begin{equation} \label{eq:Qwc}
  \Qwc(f,\pi)=\max_{\substack{x,z\\ \pi(x)>0\\ f(z|x)>0}} \dQ{x,z}\,.
 \end{equation}
The worst-case loss measures how much utility the user loses in the worst case possible. For example, if $\dQ{x,z}$ is the Euclidean distance and the user wants to query about $x$, a mechanism with $\Qwc(f,\pi)\leq 2\text{km}$ ensures that the output $z$ will not be further than 2km away from $x$. This property is very helpful for many applications that target nearby-type of services, since if the reported location is very far from the desired location then the result of the query would be generally useless for the user.

\subsection{Privacy Metrics}

We present now three notions of privacy: the average adversary error, the conditional entropy of the posterior distribution, and geo-indistinguishability.\\ 

\noindent{\textbf{Average Error.}}
The average error is the de-facto standard to measure location privacy since Shokri et al.~  \cite{Quant2011SP} argued that incorrectness determines the privacy of users. Consider that the adversary knows the prior $\pi$ and the mechanism $f$ chosen by the user. With this information, she produces an estimate $\hat{x}\in\Xestalph$ of the user's input location $x$. The choice of $\Xestalph$ depends on the computational power of the adversary. Since we assume that the user has the freedom to report any location in $\RR$, we also assume an unbounded adversary that can estimate locations on the whole world $\Xestalph=\RR$. Upon observing $z$, the adversary can build a \emph{posterior} probability mass function over the inputs, denoted as $p(x|z)$:
\begin{equation} \label{eq:posterior}
 p(x|z)=\frac{\pi(x)\cdot f(z|x)}{\sum_{x'\in\Xalph} \pi(x')\cdot f(z|x')}\,.
\end{equation}
Let $\dP{x,\hat{x}}$ be a function that quantifies the magnitude of the adversary's error when deciding that the input location was $\hat{x}$ when the input location is actually $x$. As in the case of the average loss $\Qavg$, this function $\dP{\cdot}$ does not necessarily need to be a metric (e.g., it can include the user sensitivity to an adversary learning semantic information such as in~\cite{Semantics2016PETS}). 
Given an output location $z$, the optimal decision for the adversary in terms of minimizing the average error is 
\begin{equation} \label{eq:advest}
 \hat{x}(z)=\underset{\hat{x}\in\RR}{\text{argmin }}\left\{ \sum_{x\in\Xalph} p(x|z) \cdot \dP{x,\hat{x}} \right\}\,.
\end{equation}

The average adversary's error, or just average error, is defined as the mean error incurred by an adversary that chooses the estimation $\hat{x}$ optimally given each observed $z$. Let $f_Z(z)=\sum_{x\in\Xalph} \pi(x) \cdot f(z|x)$ be the probability density function of $z$. Then, the average error is:
\begin{align}\label{eq:PAE}
  \PAE(f,\pi)&=\int_{\RR} f_Z(z) \sum_{x\in\Xalph} p(x|z) \cdot \dP{x,\hat{x}(z)} dz\\
             &=\int_{\RR} \min_{\hat{x}\in\RR}\left\{ \sum_{x\in\Xalph} \pi(x) \cdot f(z|x) \cdot \dP{x,\hat{x}} \right\} dz\,.
\end{align}
Note that mechanisms designed with $\PAE$ inherently protect against a strategic adversary, since the metric embeds the adversary's estimation. This metric has been used as part of the design objective in previous works \cite{Quant2011SP,Opt2012CCS}, and as a way of comparing the performance in terms of privacy of mechanisms designed with other different privacy goals in mind \cite{GeoInd2013CCS,OptGeoInd2014CCS,Elastic2015PETS,Practical2016}.\\

\noindent{\textbf{Conditional Entropy.}}
The conditional entropy is an information-theoretic metric that can be used to measure the adversary's uncertainty about the user's real location when $z$ is released. After observing $z$, the adversary builds the posterior $p(x|z)$ using \eqref{eq:posterior}. The uncertainty of the adversary regarding the value of $x$ given $z$ can be measured as the entropy of this posterior:
\begin{equation} \label{eq:condent0}
 H(x|z)\doteq-\sum_{x\in\Xalph}p(x|z)\cdot \log(p(x|z))\,.
\end{equation}
The conditional entropy measures the \emph{average} entropy of the posterior after $z$ is released. Formally,
\begin{equation} \label{eq:condent1}
 \PCE(f,\pi)=\int_{\RR} f_Z(z) \cdot H(x|z) dz\,,
\end{equation}
where $f_Z(z)$ is the probability density function of $z$, and $H(x|z)$ is a function of $z$ as defined in \eqref{eq:condent0}. Alternatively, using only the prior $\pi$ and the mechanism $f$, the conditional entropy can be written as
\begin{equation} \label{eq:PCE}
  \PCE(f,\pi)=-\sum_{x\in\Xalph} \int_{\RR} \pi(x) \cdot f(z|x)\cdot\log\left( \frac{\pi(x)\cdot f(z|x)}{\sum_{x'\in\Xalph} \pi(x')\cdot f(z|x')}\right)dz\,.
\end{equation}
Note that this metric does not depend on the geography of the problem, i.e., on the particular values of $x$ or $z$. If we use the base-two logarithm in the formula, then $\PCE$ can be interpreted as how many bits of information the adversary needs on average to completely identify $x$. This metric was disregarded as a possible privacy metric in \cite{Quant2011SP} due to being uncorrelated with the average error. In this work, we challenge such conclusion showing that considering solely the correctness of the adversary may lead to the design of mechanisms that offer low privacy. We show in Section~\ref{sec:complementary} how using the conditional entropy as a complementary privacy metric helps to avoid choosing those undesirable mechanisms.\\

\noindent{\textbf{Geo-Indistinguishability.}}
Geo-indistinguishability is an extension of the concept of differential privacy, originally a notion of privacy in databases, to the location privacy scenario. It was originally proposed in \cite{GeoInd2013CCS} and other works have continued the research on this line~\cite{OptGeoInd2014CCS,Elastic2015PETS,Practical2016}. Formally, $\epsilon$-geo-indistinguishability requires the following condition to be fulfilled by a location privacy-preserving mechanism $f$,
\begin{equation} \label{eq:geoindcondition}
 \int_{A} f(z|x) dz\leq e^{\epsilon\cdot \dP{x,x'}} \cdot \int_{A} f(z|x') dz \,,\quad\forall x,x'\in\Xalph\,,\forall A\subseteq\RR\,.
\end{equation}
This requirement ensures that given an area $A\subseteq\RR$, the probability of reporting a point $z$ in that area if the original location was $x$ over any other location $x'$ within some distance around $x$, is similar, and therefore $x$ and $x'$ have some degree of statistical indistinguishability. In this definition, $\dP{x,x'}$ is a function that quantifies how indistinguishable $x$ and $x'$ are: smaller values of $\dP{x,x'}$ indicate a higher indistinguishability, as the constraint becomes tighter. The privacy parameter in this definition is $\epsilon$: larger values of $\epsilon$ indicate a looser constraint that allows $f(z|x)$ and $f(z|x')$ to be more different, and therefore $x$ and $x'$ become more distinguishable. Smaller values of $\epsilon$ force the probability density functions $f(z|x)$ and $f(z|x')$ to be closer, providing more privacy. Note that, if for a single input location $x$ there is a positive probability of reporting the output in a region $A\subseteq\RR$, $\int_{A} f(z|x) dz>0$, then that must also be true for every other input location $x'$. Also, note that geo-indistinguishability is independent of the prior $\pi$.

The typical choice of $\dP{x,x'}$ in geo-indistinguishability is the Euclidean distance \cite{GeoInd2013CCS,OptGeoInd2014CCS}. Many geo-indistinguishability mechanisms rely on the fact that $\dP{x,x'}$ is a metric (specifically, in the fact that it satisfies the triangular inequality $\dP{x,x'}\leq \dP{x,z}+\dP{x',z}$) to prove that they meet the condition in \eqref{eq:geoindcondition}.

Although geo-indistinguishability is generally considered a privacy guarantee and not itself a metric, we can adapt it to represent an equivalent concept to our generic metric $\P(f,\pi)$. Given a mechanism that provides $\epsilon$-geo-indistinguishability, it is straightforward to see that it is also $\epsilon'$-geo-indistinguishable if $\epsilon'>\epsilon$. Since a smaller $\epsilon$ denotes more privacy, it makes sense to define the geo-indistinguishability level provided by a mechanism $f$ according to the smallest $\epsilon$ it guarantees. Also, since we are defining $\P(f,\pi)$ as a magnitude that grows with the protection of the users, we choose to define our measure of geo-indistinguishability, $\PGI(f)$, as the inverse of the smallest $\epsilon$ guaranteed by the mechanism. Given the mechanism $f$, we write
\begin{equation} \label{eq:PGI}
 \PGI(f)=\inf_{\substack{x,x'\in\Xalph \\ z\in\RR}} \dP{x,x'}\cdot\left|\log\frac{f(z|x)}{f(z|x')}\right|^{-1}\,,
\end{equation}
where we assume by convention that $\log(\frac{0}{0})=0$ and that $\dP{x,x'}=||x-x'||_2$ is the Euclidean distance. Larger values of $\PGI$ indicate more privacy, and the mechanism guarantees $1/\PGI$-geo-indistingui\-shability.

\section{Limitations of the Expected Adversary Error Based Evaluation}
\label{sec:PAEproblem}

The most standard way to assess the location privacy provided by two mechanisms has been the evaluation of the trade-off between their average adversary error $\PAE$ and their average loss $\Qavg$. The use of the average error as yardstick for location privacy was proposed in \cite{Quant2011SP} under the general notion of correctness, and its use as a way of comparing mechanisms was followed by many of the subsequent works \cite{Opt2012CCS,GeoInd2013CCS,OptGeoInd2014CCS,Elastic2015PETS,Semantics2016PETS,Practical2016}. The choice of distance functions $\dP{\cdot}$ and $\dQ{\cdot}$ for both the average error and the average loss in these works is mostly the Euclidean distance \cite{Opt2012CCS,GeoInd2013CCS,OptGeoInd2014CCS,Semantics2016PETS,Practical2016} although some of them also consider the Hamming distance~\cite{Quant2011SP,Opt2012CCS,OptGeoInd2014CCS} or semantic distances for privacy~\cite{Elastic2015PETS,Semantics2016PETS}.

In this section, we show the problems that stem from this established \emph{2-dimensional} evaluation approach. We start by studying the properties of mechanisms that are optimal according to these two metrics. Then, we introduce a new mechanism that we call the \emph{coin mechanism}, and use it as an example that brings to light the flaws of judging the privacy of a mechanism by its performance in terms of average error and average loss.

\subsection{Study of the Established Mechanism Evaluation}

We start our analysis by assuming that the choice of distance functions $\dP{\cdot}$ and $\dQ{\cdot}$ is the same for simplicity, which is a typical choice in related works (e.g., both are the Euclidean distance). We denote this by $\dP{\cdot}\equiv\dQ{\cdot}$. At the end of the section, we argue what happens when this is not the case. We also introduce two definitions. First, let $\FQL{\Q}$ be the set of all the mechanisms that achieve an average loss smaller or equal than $\Q$. Formally,
\begin{equation}
 \FQL{\Q}\doteq\left\{ f\, | \, \Qavg(f,\pi)\leq\Q \right\}.
\end{equation}
Also, let $\Fopt{\Q}\subseteq\FQL{\Q}$ be the set of all mechanisms $f\in\FQL{\Q}$ that are optimal in terms of average adversary error, i.e.,
\begin{equation}
 \Fopt{\Q}\doteq\left\{ f\, |\, f\in\FQL{\Q}\,,\enskip \PAE(f,\pi)\geq\PAE(f',\pi)\enskip\forall f'\in\FQL{\Q} \right\}\,.
\end{equation}
We call a mechanism inside $\Fopt{\Q}$ optimal, since it achieves as much privacy as possible among all the mechanisms with the same quality loss. 
We state the following lemma:
\begin{lemma}
 The set of optimal mechanisms with respect to the average privacy $\PAE$ and the average loss $\Qavg$ is a convex polytope.
\end{lemma}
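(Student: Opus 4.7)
The plan is to express $\Fopt{\Q}$ as the intersection of linear half-spaces in the space of mechanisms, from which the polytope structure follows immediately.

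First, I would note that $\FQL{\Q}$ is itself a convex polytope: its defining constraints---nonnegativity $f(z|x)\geq 0$, normalization $\int_{\RR} f(z|x)\,dz=1$ for each $x\in\Xalph$, and the average-loss bound $\Qavg(f,\pi)\leq\Q$ (linear in $f$ by \eqref{eq:Qavg})---are all linear in $f$.

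Second, I would establish that $\PAE$ is concave in $f$. For each fixed $z$, the integrand of \eqref{eq:PAE} is a pointwise minimum over $\hat x\in\RR$ of expressions that are linear in $f$, hence concave in $f$; integrating concave functions preserves concavity, so $\PAE$ is concave. This already gives that $\Fopt{\Q}$ is convex, since the argmax of a concave function over a convex set is convex.

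The key step for the polytope structure is to rewrite the optimality condition via deterministic adversary decision rules. For any measurable $g:\RR\to\RR$, define the linear functional
\begin{equation*}
A_g(f)\doteq \int_{\RR}\sum_{x\in\Xalph}\pi(x)\cdot f(z|x)\cdot\dP{x,g(z)}\,dz\,,
\end{equation*}
and observe that $\PAE(f)=\inf_g A_g(f)$, since the adversary selects $\hat x$ pointwise in $z$. Letting $\Popt\doteq\max_{f\in\FQL{\Q}}\PAE(f,\pi)$, a mechanism $f\in\FQL{\Q}$ belongs to $\Fopt{\Q}$ iff $\PAE(f)\geq\Popt$, which is equivalent to the family of linear inequalities $A_g(f)\geq\Popt$ holding for every $g$. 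Hence $\Fopt{\Q}$ is an intersection of $\FQL{\Q}$ with a collection of half-spaces, i.e., a convex polytope.

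The main obstacle is that the family of decision rules $g$ is a priori infinite, so strictly speaking the ``polytope'' in the statement must be read as an intersection of possibly infinitely many half-spaces---a polyhedral convex set in the space of conditional densities. When the output alphabet is discretized as in standard LP formulations of this problem (e.g., Shokri et al.~\cite{Opt2012CCS}), only finitely many rules $g$ are relevant and $\Fopt{\Q}$ becomes a genuine face of the LP feasible polytope in the usual finite-dimensional sense.
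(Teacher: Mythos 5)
Your proof is correct and follows the same overall route as the paper's---exhibiting $\Fopt{\Q}$ as an intersection of half-spaces---but it is more careful at the one point where care is actually needed. The paper's own proof writes $\Fopt{\Q}=\{f \mid \PAE(f,\pi)=\Popt(\Q),\ \Qavg(f,\pi)\leq \Q\}$ and then asserts that both $\PAE$ and $\Qavg$ are \emph{linear} in $f$; that assertion is literally false for $\PAE$, which is a pointwise infimum of linear functionals and hence only concave. Your decomposition $\PAE(f)=\inf_g A_g(f)$ over deterministic decision rules $g$ is exactly the repair: it converts the single non-linear condition $\PAE(f,\pi)\geq\Popt(\Q)$ into the family of linear conditions $A_g(f)\geq\Popt(\Q)$, which is what legitimately yields an intersection of half-spaces (and the equality in the paper's characterization can indeed be relaxed to an inequality because $\Popt(\Q)$ is the maximum over $\FQL{\Q}$, as you note). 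Your closing caveat---that in the continuous-output model the intersection involves infinitely many half-spaces, so ``polytope'' should be read as ``polyhedral convex set'' unless the output alphabet is discretized as in the LP formulations---is a legitimate qualification that the paper glosses over. In short: same skeleton as the paper, but your version supplies the two justifications (the concavity of $\PAE$ handled via adversary decision rules, and the finite-versus-infinite intersection issue) that the paper's one-line proof leaves implicit.
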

\begin{proof}
Let the privacy achieved by any mechanism in $\Fopt{\Q}$ be $\Popt(\Q)$. Then, we can define this set as
\begin{equation} \label{eq:FoptPoly}
 \Fopt{\Q}=\{f\, |\, \PAE(f,\pi)=\Popt(\Q),\quad \Qavg(f,\pi)\leq \Q\}\,,
\end{equation}
and since $\PAE(f,\pi)$ and $\Qavg(f,\pi)$ are linear operations with $f$, \eqref{eq:FoptPoly} can be written as an intersection of half-spaces, which forms a convex polytope.
\end{proof}
Note that the proof also applies to the case where $\dP{\cdot}\nequiv\dQ{\cdot}$ (e.g., privacy as the average Hamming error of the adversary and quality loss as the average Manhattan distance). The same outcome can be derived for the conditional entropy and geo-indistinguishability, although we leave those results out of the scope of this work.

This lemma shows that there is a \emph{family} of optimal mechanisms that lie inside a convex polytope, instead of just a single mechanism. All of them provide the same (maximal) privacy for the same quality loss constraint so, in principle, they are equally useful. In what follows, we show why this is not the case.

We start by introducing the concept of remapping. A remapping $g$ is a function $g:\RR\to\RR$ that maps an output $z\in\RR$ to another output $z'\in\RR$ according to the probability density function $g(z'|z)$. It is well known that if we generate a mechanism $f'=f\circ g=\int_{\RR} g(z'|z) \cdot f(z|x) dz$, then the privacy of $f'$ in terms of average error, conditional entropy or geo-indistinguishability is not smaller than that of $f$. This is reasonable, as the remapping $g$ is independent from $x$, and thus it does not reveal any information about it. The optimal Bayesian remapping is defined as follows:
\begin{definition}[Optimal remapping]
 Given a mechanism $f$, its optimal remapping is the one that minimizes the average loss of the composition $f'=f\circ g$, i.e., $g(z'|z)=\delta(z'-r(z))$, where
 \begin{equation} \label{eq:optremapping}
  r(z)=\underset{z'\in\RR}{\text{argmin }} \sum_{x\in\Xalph} \pi(x)\cdot f(z|x)\cdot \dQ{x,z'}\,.
 \end{equation}
\end{definition}
This remapping assigns each location $z$ to the location $r(z)$ in \eqref{eq:optremapping}, and is used in \cite{Practical2016} as a way of improving the utility of geo-indistinguishability mechanisms. Now, we show that it can also be used not only to reduce the quality loss of mechanisms but to achieve optimal mechanisms in terms of average error privacy:
\begin{theorem} \label{theo:optmech}
 Let $g$ be an optimal remapping for mechanism $f$, and let $f'$ be the composition $f'=f\circ g$. If $\dP{\cdot}\equiv\dQ{\cdot}$, then $f'$ is an optimal mechanism, i.e., $f'\in\Fopt{\Qavg(f',\pi)}$.
\end{theorem}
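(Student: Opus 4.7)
The plan is to exploit the observation that, when $\dP{\cdot}\equiv\dQ{\cdot}$, the optimal remapping $r(z)$ from \eqref{eq:optremapping} coincides pointwise with the adversary's optimal Bayesian estimator $\hat{x}(z)$ from \eqref{eq:advest}: the posterior $p(x|z)$ differs from $\pi(x)f(z|x)$ only by the factor $f_Z(z)$, which does not depend on the argmin variable. This single identification bridges the privacy functional $\PAE$ and the quality-loss functional $\Qavg$.

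First I would unfold $\Qavg(f',\pi)$ by substituting $f'(z'|x)=\int_{\RR}\delta(z'-r(z))f(z|x)dz$ and integrating out the delta, obtaining
\begin{equation*}
\Qavg(f',\pi)=\int_{\RR}\sum_{x\in\Xalph}\pi(x)f(z|x)\dQ{x,r(z)}dz.
\end{equation*}
Under $\dP{\cdot}\equiv\dQ{\cdot}$, the bracket is exactly the inner minimum of \eqref{eq:PAE} evaluated on $f$ at each $z$, so $\PAE(f,\pi)=\Qavg(f',\pi)$. Next I would invoke the statement cited in the paragraph immediately before the theorem: post-composition of any mechanism with a data-independent remapping cannot decrease $\PAE$. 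This gives $\PAE(f',\pi)\geq\PAE(f,\pi)=\Qavg(f',\pi)$.

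Finally, for an arbitrary competitor $h\in\FQL{\Qavg(f',\pi)}$, I would upper-bound $\PAE(h,\pi)$ by $\Qavg(h,\pi)$: the minimum over $\hat{x}\in\RR$ in \eqref{eq:PAE} is at most the value at the particular choice $\hat{x}=z$, which under $\dP{\cdot}\equiv\dQ{\cdot}$ is exactly the integrand of $\Qavg(h,\pi)$. Chaining the three bounds produces
\begin{equation*}
\PAE(h,\pi)\leq\Qavg(h,\pi)\leq\Qavg(f',\pi)\leq\PAE(f',\pi),
\end{equation*}
which is precisely $f'\in\Fopt{\Qavg(f',\pi)}$. The only delicate step is the careful handling of the Dirac delta in the composition $f\circ g$ so that the substitution $z'\mapsto r(z)$ is rigorous; once that is set up, the theorem is a short chain of inequalities, each being either the definition of $r$, the quoted ``remapping never hurts'' fact, or the trivial bound obtained by letting the adversary guess $\hat{x}=z$.
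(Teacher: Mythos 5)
Your proof is correct. It rests on the same two load-bearing facts as the paper's proof---the identification of the optimal remapping target $r(z)$ with the Bayes estimator $\hat{x}(z)$ when $\dP{\cdot}\equiv\dQ{\cdot}$, and the universal bound $\PAE(h,\pi)\leq\Qavg(h,\pi)$ obtained by letting the adversary guess $\hat{x}=z$---but it closes the argument by a different middle step. The paper establishes $\PAE(f',\pi)=\Qavg(f',\pi)$ directly, asserting that $f'$ is a fixed point of the optimal remapping so that the adversary's best estimate on $f'$ is the identity; that fixed-point claim is stated without justification and in fact needs a small extra argument (that the argmin is preserved when mixing $f(\cdot|x)$ over the preimage $r^{-1}(z)$). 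You instead compute $\Qavg(f',\pi)=\PAE(f,\pi)$ by integrating out the Dirac delta, which is immediate from the definition of $r$, and then invoke the data-processing fact that post-composition with an $x$-independent remapping cannot decrease $\PAE$---a fact the paper states as well known in the paragraph preceding the theorem but does not prove. So your route makes rigorous the one step the paper glosses over, at the price of leaning on the unproven (though standard) ``remapping never hurts'' inequality; the resulting chain $\PAE(h,\pi)\leq\Qavg(h,\pi)\leq\Qavg(f',\pi)\leq\PAE(f',\pi)$ for every competitor $h$ with $\Qavg(h,\pi)\leq\Qavg(f',\pi)$ delivers $f'\in\Fopt{\Qavg(f',\pi)}$ exactly as required.
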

The proof is provided in the Appendix.

This theorem provides a straightforward way of building an optimal mechanism $f'$ from any mechanism $f$. The idea is to reassign each output $z$ of $f$ to another symbol $z'$ such that the average quality loss is minimized. Doing this for every output ensures that the quality loss cannot be further reduced, and since the distance function used to evaluate quality loss and privacy is the same, the best estimation the adversary can do of $x$ is just to keep the released value. Note that the $\Qavg(f',\pi)\leq\Qavg(f,\pi)$. This means that, in order to find an optimal mechanism $f'$ for a target quality loss $\Qavg(f',\pi)=\Q$ using the remapping strategy, one has to adjust the loss of the mechanism $f$ (e.g., by tuning its variance if it is a noise mechanism) until $f'$ achieves the desired average loss $\Q$.

It is straightforward to see that, if the optimal remapping for a mechanism $f$ is just doing nothing, then it means $f$ is optimal:
\begin{corollary} \label{coro:optmech}
 If the optimal remapping in \eqref{eq:optremapping} for a mechanism $f$ is $g(z'|z)=\delta(z'-z)$, then $f$ is optimal for its quality loss $\Q$, i.e., $f\in\Fopt{\Q}$.
\end{corollary}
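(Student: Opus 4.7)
The plan is to derive this as an immediate consequence of Theorem~\ref{theo:optmech}. The key observation is that the Dirac-delta kernel $g(z'|z) = \delta(z'-z)$ acts as the identity under the composition operation used in the statement of that theorem: specifically,
\[
    f'(z'|x) = (f \circ g)(z'|x) = \int_{\RR} g(z'|z) \cdot f(z|x) \, dz = \int_{\RR} \delta(z'-z) \cdot f(z|x) \, dz = f(z'|x),
\]
so $f' = f$, and in particular $\Qavg(f',\pi) = \Qavg(f,\pi) = \Q$. The hypothesis of the corollary is then exactly the hypothesis of Theorem~\ref{theo:optmech}: the stated $g$ is an optimal remapping for $f$. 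Applying that theorem yields $f' \in \Fopt{\Qavg(f',\pi)}$, and substituting the two identifications above gives $f \in \Fopt{\Q}$, as desired.

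There is essentially no obstacle here, as the corollary is a direct specialization of Theorem~\ref{theo:optmech}. The only subtlety worth flagging is the treatment of the Dirac delta as an integration kernel, but this is already handled by the convention (introduced earlier in Section~\ref{sec:sysmodel}) that $\delta(z-z')$ is understood as a narrow Gaussian pdf for integration purposes, under which $\int_{\RR}\delta(z'-z)\cdot f(z|x)\,dz = f(z'|x)$ via the standard sifting property. No analysis beyond invoking Theorem~\ref{theo:optmech} is required.
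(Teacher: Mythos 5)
Your proposal is correct and matches the paper's intent exactly: the paper leaves this corollary without an explicit proof precisely because, as you observe, the identity kernel makes $f' = f\circ g = f$, so Theorem~\ref{theo:optmech} applies verbatim. The only point worth keeping in mind (inherited from the theorem, not a gap in your argument) is that the conclusion still requires the standing assumption $\dP{\cdot}\equiv\dQ{\cdot}$.
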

This is a very convenient way of proving the optimality of a mechanism when $\dP{\cdot}\equiv\dQ{\cdot}$. Another way of seeing that such mechanism is optimal, is by realizing that with this choice of metrics, the privacy is upper bounded by the quality loss $\PAE(f,\pi)\leq\Qavg(f,\pi)$, and the upper bound is indeed achieved when an optimal mechanism is used. We note that the fact that $\PAE(f,\pi)=\Qavg(f,\pi)$ for optimal mechanisms is not new, as it was already mentioned in \cite{GeoInd2013CCS} about the mechanisms in \cite{Opt2012CCS}.

\subsection{The Coin Mechanism and the Flaws of the Traditional Approach}
\label{sec:coin}

We now discuss the following mechanism, which we call \emph{the coin mechanism}, and prove that it is optimal. Let $z^*$ be the output location that minimizes the average quality loss of a mechanism that always reports that location regardless of the input $x$. Formally,
\begin{equation} \label{eq:optoutputcoin}
 z^*\doteq \underset{z\in\RR}{\text{argmin }} \sum_{x\in\Xalph} \pi(x)\cdot \dQ{x,z}\,.
\end{equation}
As an example, if we measure the point-to-point loss as the mean squared error $\dQ{x,z}=||x-z||_2^2$, then $z^*$ will be given by the mean $z^*=\sum_{x\in\Xalph} \pi(x) \cdot x$. If the loss is measured as the Euclidean distance, then $z^*$ is the geometric median of $\pi$. Given a generic distance function $\dQ{\cdot}$, the optimal output location $z^*$ can be computed by solving the optimization problem in \eqref{eq:optoutputcoin}. 

Let $\Q^*$ be the average quality loss achieved by a mechanism that always reports $z^*$ regardless of the input. We construct the following mechanism, which we denote $\fcoin$. First, we fix a desired quality loss $\Q\leq \Q^*$ and compute $\alpha\doteq1-\Q/\Q^*$. Then, we build
\begin{equation} \label{eq:coin}
 \fcoin(z|x)=\alpha\cdot\delta(z-x)+(1-\alpha)\cdot\delta(z-z^*)\,,
\end{equation}
where $z^*$ is in \eqref{eq:optoutputcoin}. This mechanism can be easily explained and implemented simulating a coin flip. We first set our desired quality loss $\Q\leq\Q^*$. Note that it would not make sense to fix $\Q$ to a value larger than $\Q^*$ since we would not achieve more privacy by doing so; a mechanism that always reports $z^*$ and has an average loss of $\Q^*$ yields the highest privacy allowed by $\pi$. Then, we compute $\alpha=1-\Q/\Q^*$ and set it as the probability that our coin shows heads. Assume we are interested in querying about a location $x\in\Xalph$, so we flip the coin. If the coin shows heads, then we report our desired location $z=x$. If the coin hits tails, then we report $z^*$ regardless of the value of $x$. It is easy to see that the average loss of \eqref{eq:coin} is indeed $\Q$, by the linearity of this metric with $f$.

\begin{proposition}
 The coin mechanism obtained for quality loss $\Q$ achieves the maximum average adversarial error possible given a constraint on the average quality loss, i.e., $\fcoin(\Q)\in\Fopt{\Q}$, if both are measured with the same distance function $\dP{\cdot}\equiv\dQ{\cdot}$.
\end{proposition}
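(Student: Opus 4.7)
The plan is to invoke Corollary~\ref{coro:optmech}: it suffices to show that the optimal remapping of $\fcoin$ is the identity, i.e., that the minimizer $r(z)$ in~\eqref{eq:optremapping} satisfies $r(z)=z$ at every $z$ in the support of $\fcoin$. This support consists of $z^*$ together with every $x_i\in\Xalph$ with $\pi(x_i)>0$, so I would verify $r(z)=z$ case by case, handling the Dirac deltas via the narrow-Gaussian convention stated in Section~\ref{sec:sysmodel}.

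I split the analysis into three cases based on how the two delta terms of $\fcoin(z|x)=\alpha\delta(z-x)+(1-\alpha)\delta(z-z^*)$ interact at $z$. First, if $z=x_i$ for some $i$ and $x_i\ne z^*$, only the $\alpha\delta(z-x)$ term fires, and only for $x=x_i$; thus in the sum~\eqref{eq:optremapping} only the $x=x_i$ term survives, and its minimizer in $z'$ is trivially $x_i=z$. Second, if $z=z^*$ and $z^*$ differs from every $x_i\in\Xalph$, only the $(1-\alpha)\delta(z-z^*)$ term fires and contributes uniformly in $x$, so the objective is proportional to $\sum_{x\in\Xalph}\pi(x)\,\dQ{x,z'}$, which by the defining property~\eqref{eq:optoutputcoin} of $z^*$ is minimized at $z'=z^*=z$. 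Third, in the degenerate case $z=z^*=x_i$, both deltas fire simultaneously, and (working in the narrow-Gaussian limit and dividing through by the common peak value) the effective weight on $x_i$ becomes $\pi(x_i)$, while the weight on $x_j$ with $j\ne i$ is $(1-\alpha)\pi(x_j)$. Writing $1=\alpha+(1-\alpha)$, the objective then decomposes as
\[
\alpha\,\pi(x_i)\,\dQ{x_i,z'}+(1-\alpha)\sum_{x\in\Xalph}\pi(x)\,\dQ{x,z'},
\]
whose two summands are simultaneously minimized at $z'=x_i=z^*$ --- the first trivially and the second by~\eqref{eq:optoutputcoin}. Hence $r(z)=z$ throughout the support of $\fcoin$, and Corollary~\ref{coro:optmech} yields $\fcoin\in\Fopt{\Q}$.

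The main obstacle is this third case: when the two Dirac masses coincide, one has to disentangle their joint contribution rather than just reading off a single mass. Everything else is a direct translation of the defining property of $z^*$ into the optimal-remapping criterion, together with the trivial observation that at a clean output of the form $x_i$ (with $x_i\ne z^*$) the adversary is already optimally served by the identity.
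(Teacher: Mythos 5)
Your proof is correct and takes the same route as the paper: the paper's entire argument is the one-line remark that the result ``is straightforward using Corollary~\ref{coro:optmech}'', and your case-by-case verification that the optimal remapping of $\fcoin$ is the identity on its support is exactly the detail that remark leaves implicit. Your treatment of the degenerate case $z^*=x_i$, where the two Dirac masses coincide, is a sound filling-in of a subtlety the paper does not spell out.
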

The proof is straightforward using the result in Corollary~\ref{coro:optmech}.

We now reason why, even though the coin mechanism is optimal by the standards that have been used to evaluate privacy in prior works (i.e., $\PAE$ and $\Qavg$), this mechanism is hardly desirable for any user. When the coin shows heads, the adversary observes $z$. If $z\neq z^*$, the adversary knows for sure that the user was interested in querying about $x=z$ and therefore the user has no privacy at all. In this case, for privacy issues, there was no point in using the mechanism. When the coin shows tails, the user is mapped far away to $z^*$. The adversary observes $z^*$ and has no idea where the user is, besides the prior $\pi$ that was already known by her. In this case, the privacy of the user is maximal, but the quality loss is very large, since $z^*$ is almost always very far away from the user. The quality loss is so large that the utility the user gets from this realization of the mechanism can be considered zero, so we can say that there was no point in using the mechanism in this case either. We have reached the issue we mentioned earlier: there is a mechanism, optimal by classic location privacy standards~\cite{Quant2011SP}, that is useless both from the privacy and the quality loss point of view. This shows that there is a fundamental problem with the classic way that has been used to evaluate location privacy mechanisms.

\subsection{The reach of this problem}
One could think that the problem of this bi-dimensional evaluation approach lies on the fact that one cannot use the same metric to measure quality loss and privacy, e.g., the Euclidean distance. However, even with different metrics, mechanisms similar to the coin can be derived. For example, if privacy is the average mean squared error and quality loss is measured as the average Manhattan distance (i.e., the $l_1$ norm), a deterministic mechanism that consists on reporting the real location on most of the places and mapping to the other side of the Earth in some others is optimal, due to the fact that the MSE grows quadratically with the distance, while the $l_1$ (or any $l_p$ norm) does not. In our evaluation, we show an example where a mechanism optimized for $\PAE$ and $\Qavg$ with a different pair of distance functions $\dP{\cdot}\neq\dQ{\cdot}$ suffers from the coin issue. The problem does not arise from the particular distance functions $\dP{\cdot}$ and $\dQ{\cdot}$ one uses to evaluate the average error and loss, but from the fact that these metrics are \emph{averages}, and as such they do not restrict the minimum privacy of a single use of the mechanism or the maximum quality loss of the mechanism, they just ensure that the average is good. We believe that, while evaluating the average behavior of a mechanism is not an erroneous notion per-se, it must be handled with care to avoid undesirable results, such as the coin mechanism.

As a concluding remark, we would like to note that we have shown this problem assuming that the outputs of the mechanism and the values estimated by the adversary are points in $\RR$, for notational simplicity and generality. An important fraction of previous works \cite{Quant2011SP,Opt2012CCS,OptGeoInd2014CCS,Elastic2015PETS,Semantics2016PETS} assume a discrete model where the set of output values $\Zalph$ and estimated values $\Xestalph$ are the centers of a grid over the map or points of interest such as $\Xalph$. In these scenarios, one can derive a similar mechanism, where hitting tails means that the user reports the location out of the allowed ones that minimizes the average error. That mechanism can also be shown to be optimal in terms of average error and loss, although it is not a desirable mechanism for any user. For completeness, we also evaluate this scenario in our experiments. The same applies to the case where instead of having discrete input locations $\Xalph$, users can report any point in $\RR$ (for example, a tracking or a date finder application). The coin mechanism in \eqref{eq:coin} can be applied directly to this scenario, and it can be shown to be optimal (changing the summations over $\Xalph$ to integrals). It is clear that using the traditional evaluation approach has flaws in all these scenarios and we must find a solution to this.

\section{Complementary Mechanism Evaluation Criteria}
\label{sec:complementary}

So far we have seen that evaluating mechanisms based solely on the average error and quality loss does not reflect whether a mechanism is actually more beneficial than another one, due to the fact that some undesirable mechanisms are deemed optimal by this approach. In this section, we propose a solution to this evaluation procedure that consists in incorporating complementary evaluation criteria that add different perspectives to the performance of a mechanism in terms of privacy and quality loss.

We propose two metrics, that are not intended to be used as a replacement of the average error and average loss but in combination with them, adding new dimensions to the privacy vs.~quality loss trade-off. The first metric we propose is the conditional entropy, a privacy metric that helps detecting inconsistent mechanisms such as the coin. The second one is the worst-case loss, a quality loss metric that provides a way of staying out of mechanisms that might yield no utility for the user at all. We comment on the implementation of mechanisms that take these metrics into consideration, and propose a mechanism that maximizes the conditional entropy while being optimal in terms of average error and quality loss. We finish the section describing other alternative privacy metrics.

\subsection{The Conditional Entropy as a Complementary Metric}

\subsubsection{Usefulness of the Conditional Entropy}

One of the problems of the coin mechanism can be seen from an information-theoretic point of view. The coin is a binary mechanism, in the sense that each input location can only be mapped to itself or to a fixed point in the map. From the adversary's perspective, this means that if the coin shows heads the adversary has no uncertainty at all about the user's input location, and if it shows tails the uncertainty is maximal. The conditional entropy can be used to detect these scenarios where the adversary has no uncertainty about $x$. Recalling \eqref{eq:condent1}, the conditional entropy can be written as
\begin{equation} \label{eq:condent2}
 \PCE(f,\pi)=\int_{\RR} f_Z(z) \cdot H(x|z) dz\,,
\end{equation}
where $H(x|z)\doteq-\sum_{x\in\Xalph} p(x|z)\cdot \log(p(x|z))$ is the entropy of the posterior after a location $z$ is released. It is clear that \eqref{eq:condent2} is an average over the entropy of all the posteriors. However, contrary to the average error, the conditional entropy is an average over functions $H(z|x)$ that are strictly concave with $f$. This means that in order to perform well in terms of the conditional entropy, a mechanism must spread its uncertainty among every posterior $p(x|z)$ instead of achieving maximal uncertainty with some outputs and zero uncertainty with others, as the coin does.

Another interesting property of the entropy is that it is not a geographical metric. The entropy of a posterior $H(x|z)$ does not depend on the coordinates of the input locations or the semantic information tied to them (e.g., if the location is a hospital or a club). The entropy only depends on how evenly the posterior is distributed among the input locations. This probabilistic aspect of privacy, defined as \emph{uncertainty} in \cite{Quant2011SP}, cannot be captured by other privacy notions such as correctness (e.g., the average adversary error). Due to the geographic nature of the location privacy problem, we cannot judge a mechanism based solely on its entropy. However, using it as an additional dimension of privacy gives a more complete picture of the performance of a mechanism.

We would like to point out that this notion of uncertainty provided by the entropy was disregarded as a reasonable privacy metric in \cite{Quant2011SP} based on the fact that, since it is not correlated with the adversary error, it does not capture how hard is for the adversary to estimate the real input location. We claim that it is indeed the fact that the entropy is not correlated with the adversary error which gives it a special value as a \emph{complementary} metric of privacy. The same way that semantic location privacy metrics have been proposed together with geographic metrics~\cite{Elastic2015PETS,Semantics2016PETS} to give different perspectives on the problem, the conditional entropy is a tool that gives valuable information about the protection provided by the mechanism not captured by the average error.

We would like to make two remarks regarding the entropy. First, the conditional entropy $\PCE(f,\pi)$ must be taken into account together with the mutual information $I(X;Z)$ to get a full picture of the information-theoretic properties of the mechanism. The conditional entropy represents the average amount of uncertainty the adversary has about the real location $x$ after observing $z$. A small value of conditional entropy indicates low uncertainty, and therefore we might get the impression that a mechanism with such small value provides low privacy. However, it might have been possible that the entropy of the prior was already low, and therefore even if the mechanism was perfect from the privacy point of view (i.e. it did not reveal any information, $I(X;Z)=0$), there is nothing any mechanism could have done to avoid having a low conditional entropy. We must therefore take into account the mutual information or, equivalently, the entropy of the prior $\pi$, when interpreting the value given by the conditional entropy.

The second remark is that the conditional entropy must not be tailored to a particular adversary with a possibly wrong knowledge of the prior $\pi$. In this work, we have assumed that the prior $\pi$ models the choice of input locations by the users, and therefore the correct way of computing the entropy is by using $\pi$ in the formulas above. This entropy must be regarded as the uncertainty that a very strong passive adversary with full knowledge of the behavior of the users would have when observing $z$.

\subsubsection{Implementation of Mechanisms with large Conditional Entropy}
\label{sec:BlahutArimoto}
We now look for a mechanism that is optimal in terms of the average error and average loss, i.e., a mechanism in $\Fopt{\Q}$, that also achieves as much conditional entropy as possible. This problem is equivalent to the rate-distortion problem~\cite{cover2012elements} of finding a pdf $f(z|x)$ that minimizes the mutual information between $x$ and $z$ subject to a quality loss constraint, which can be solved iteratively by implementing the Blahut-Arimoto algorithm.
For this, we must first restrict our output to a discrete alphabet $\Zalph$ for computational reasons. The more points we assign to this alphabet and the more evenly we cover the space where we want to compute the mechanism with them, the better its performance will be. Since both the input and output domains are discrete, the mechanism is determined by the probabilities of reporting $z$ when the user is in $x$, that we denote by $\fdiscrete{z}{x}$ here for clarity. We start with an initial mechanism, for example uniform mapping $\fdiscrete{z}{x}=1/|\Zalph|$. Then, we perform the following steps:
\begin{enumerate}
  \item We compute the probability mass function of each the output:
        \begin{equation}
         P_Z(z)=\sum_{x\in\Xalph} \pi(x)\cdot \fdiscrete{z}{x}\,,\qquad\forall z\in\Zalph\,.
        \end{equation}
  \item We update the mechanism as follows:
        \begin{equation} \label{eq:expoBA}
         \fdiscrete{z}{x}=P_Z(z)\cdot e^{-b \cdot \dQ{x,z}}\,,\qquad\forall x\in\Xalph,z\in\Zalph.
        \end{equation}
  \item We normalize the mechanism:
	\begin{equation}
         \fdiscrete{z}{x}=\frac{\fdiscrete{z}{x}}{\sum_{z'\in\Zalph} \fdiscrete{z'}{x}}\,,\qquad\forall x\in\Xalph,z\in\Zalph.
        \end{equation}
        We skip this step for the outputs $z$ with $P_Z(z)=0$.
  \item We repeat these steps until the change in the probabilities $\fdiscrete{z}{x}$ is below some threshold.
\end{enumerate}
The value of $b$ in the second step needs to be tuned to change the quality loss of the mechanism $\Qavg(f,\pi)$ and cannot be pre-computed to achieve an exact value of average loss. Larger values of $b$ yield mechanisms with less quality loss, and therefore less average error privacy and less conditional entropy. Finally, we obtain our mechanism $f(z|x)$ by applying the optimal remapping to the discrete mechanism defined in $\Xalph\to\Zalph$ by the probabilities $\fdiscrete{z}{x}$. This ensures that the resulting mechanism is optimal from the adversary error privacy point of view.

We make two remarks regarding this algorithm. The first one is about its computational cost. The operations in the three steps above are not expensive as they only include multiplications and additions. The number of elements we need to compute in order to build $\fdiscrete{z}{x}$ is $N\doteq|\Xalph|\cdot|\Zalph|$. The first step above consists of $N$ products and additions. In the second step $e^{-b\cdot\dQ{x,z}}$ can be precomputed as $b$, $\Xalph$ and $\Zalph$ do not change during the algorithm, so we only have to make $N$ multiplications, and in the third step we compute $|\Xalph|$ values of $\sum_{z'\in\Zalph}\fdiscrete{z'}{x}$ and then perform $N$ divisions. It is clear then that the cost grows with the sizes of $\Xalph$ and $\Zalph$. However, the algorithm only needs to be computed once for all the users, which can be done in the cloud, and even if the prior $\pi$ varies we can use a previously computed algorithm as initialization of the iteration above to get a fast update of the mechanism.

The second remark is that the mechanism produced by this algorithm also satisfies $2b$-geo-indistinguishability (the proof is in the Appendix). This is a byproduct property that was not part of the reasoning behind the algorithm and it does not imply that the conditional entropy and geo-indistinguishability are related. In fact, these are \emph{fundamentally different} notions: the former is an average metric that only considers the probabilistic (and not the geographic) aspect of the problem, while the latter is a worst-case metric that also considers the geography of the problem. Also, if we truncate the optimal conditional entropy mechanism, we obtain a mechanism that is almost optimal in terms of conditional entropy but does not provide \emph{any} level of geo-indistinguishability.

We evaluate this mechanism and others with respect to the conditional entropy and the traditional metrics in Section~\ref{sec:eval}.

\subsection{The Worst-Case Quality Loss as a Complementary Metric}
\label{sec:Qwc}

\subsubsection{Usefulness of the Worst-Case Quality Loss}
After analyzing the privacy problems of the coin mechanism, we now turn to the utility point of view. The great drawback of the coin mechanism from the quality loss perspective is that if the coin shows tails then the server's response to the user's query will most likely be useless due to the great quality loss incurred by reporting $z^*$. We can think of many applications where, if the Euclidean distance between $x$ and $z$ is larger than a certain value, the user gets literally nothing from the server response. For example, if we are close to a point of interest $x$ and we want to find a nearby hospital, querying about a location $z$ in another city will likely return a useless response from the server. In that case, we could think of generating another output and query the server again because we did not get what we were hoping for. By doing so, the privacy properties of the mechanism change, and in the case of the coin it is equivalent to always revealing our true location.

A solution to this utility issue consists in imposing a worst-case quality loss constraint on the mechanism, i.e.,
\begin{equation} \label{eq:Qwcconst}
 \Qwc(f,\pi)=\max_{\substack{x,z\\ \pi(x)>0\\ f(z|x)>0}} \dQ{x,z}\leq\Qwcmax\,.
\end{equation}
To put it simply, we want a mechanism that releases output locations within $\Qwcmax$ from the input location, i.e., a \emph{bounded mechanism}. The upper bound $\Qwcmax$ would be tuned depending on the application in question, so that a user never gets a worthless result. When used together with the average error and the average loss, the worst-case loss metric reveals those mechanisms we might want to avoid using. It is easy to see that the coin mechanism, although optimal in terms of $\PAE$ and $\Qavg$, gives a very large value of $\Qwc(\fcoin,\pi)$, which manifests its uselessness.

An interesting consequence of setting a maximum worst-case quality loss constraint when designing a mechanism is that it can simplify the computational cost of the protocol that implements or computes it. For example, take the case of the works in \cite{Opt2012CCS,OptGeoInd2014CCS}, where authors assume a discrete set of output locations $\Zalph$ and propose to solve a linear program to find an optimal mechanism (in terms of average error and geo-indistinguishability, respectively). The constraint in \eqref{eq:Qwcconst} reduces the amount of variables that need to be computed in these programs (only a subset of $\Zalph$ are possible outputs for each input $x\in\Xalph$), as well as the amount of constraints, which in turn decreases drastically the computational cost of the problem. In other implementations of mechanisms, where $f$ is not explicitly derived but computed by adding (continuous) noise and then computing a remapping using the posterior (c.f.~\cite{Practical2016}), having a worst-case quality loss constraint reduces the amount of inputs that need to be considered when computing the posterior, effectively reducing the computational cost of the algorithm.

Finally, we would like to note that this metric exposes a basic problem with geo-indistinguishability mechanisms. As mentioned before, when using a geo-indistinguishability mechanism, if a user with input location $x$ has non-zero probability of reporting $z\in A\subseteq\RR$, then when the input location is any other $x'\in\Xalph$ she must assign a non-zero probability to reporting $z\in A$. This means that for any geo-indistinguishable mechanism $f$, the worst-case quality loss metric $\Qwc(f,\pi)$ gives a huge value and the probability of getting a useless response from the server would be larger than zero. One could argue that, given the nature of the geo-indistinguishability guarantee, the probability of reporting a location $z$ far from $x$ is low and decreases exponentially with the distance between them, so we could disregard such an event from happening. However, if we really truncate the mechanism to ensure that the probability of going very far is zero, then the mechanism does not provide any geo-indistinguishability guarantee at all. It is then clear that geo-indistinguishability mechanisms are problematic from the quality loss point of view, and if a user gets zero utility from a realization of the mechanism she cannot re-use it immediately, otherwise the privacy guarantee is violated. We comment on a possible solution to this problem below.

\subsubsection{Implementation of Mechanisms with Worst-Case Quality Loss Constraint}
Now we set the task of designing a mechanism that achieves a good value of worst-case quality loss or, alternatively, that ensures that the worst-case quality loss is below some bound $\Qwc(f,\pi)\leq\Qwcmax$. The straightforward approach, given a mechanism $f$, is to truncate the mechanism (for example, by generating samples of $z$ until one of them ensures that $\dQ{x,z}\leq\Qwcmax$, and then releasing that $z$). This approach is reasonable, but one must take into account that the privacy properties of this new truncated mechanism $f'$ are not the same as the original mechanism $f$, and therefore they must be re-evaluated.

Another issue that concerns the design of bounded mechanisms is that a deterministic remapping \eqref{eq:optremapping} might violate a $\Qwc$ constraint (i.e., even if $f$ guarantees the $\Qwc$ constraint, a composition $f'=f\circ g$ might not guarantee it). Finding a bounded mechanism that achieves as much privacy as an unbounded one in $\Fopt{\Q}$ can be an impossible task, due to the fact that the polytope defined by $\Qwc(f,\pi)\leq\Qwcmax$ might be disjoint with $\Fopt{\Q}$. However, we can lose some privacy with respect to an optimal unbounded mechanism in exchange for a better worst-case quality loss guarantee by enforcing the bounding constraint $\Qwc(f,\pi)\leq\Qwcmax$. 

\subsection{Other Complementary Metrics}

Now, we finally outline other metrics that can be used together with the average error and average quality loss to assess the privacy of mechanisms, and leave the development of mechanisms taking them into account as subject for future work.

Geo-indistinguishability \eqref{eq:geoindcondition} inherently ensures that an input location $x$ is mapped to a nearby location with more probability than to a far location, which solves the privacy issue we illustrated with the coin mechanism. However, this privacy notion is not compatible with a worst-case quality loss constraint by definition, due to the fact that $f(z|x)>0$ implies $f(z|x')>0$, $\forall x'\in\Xalph$. A possible approach to solve this utility issue of geo-indistinguishability can be to relax its definition, allowing a small tolerance value $\Delta\ll 1$, i.e.,
\begin{equation}
 \int_{A} f(z|x) dz\leq e^{\epsilon\cdot \dP{x,x'}} \cdot \int_{A} f(z|x') dz + \Delta \,,\quad\begin{aligned}\forall x,x'\in\Xalph\,,\\ \forall A\subseteq\RR\,.\end{aligned}
\end{equation}

Other interesting metrics to assess the privacy of mechanisms are those based on the worst-case output. For example, the worst-case output average error, defined as
\begin{equation}
 \PWCAE(f,\pi)=\min_{\substack{z\in\RR\\ f_Z(z)>0}} \min_{\hat{x}\in\RR}\left\{ \sum_{x\in\Xalph} \pi(x) \cdot f(z|x) \cdot \dP{x,\hat{x}} \right\}\,,
\end{equation}
measures the average error of the adversary's estimation in the most vulnerable output. When applied to the coin mechanism, this metric would reveal its privacy issue, since $\PWCAE(\fcoin,\pi)=0$.

On the other hand, the worst-case output conditional entropy, defined as
\begin{equation}
 \PWCCE(f,\pi)=\min_{\substack{z\in\RR\\ f_Z(z)>0}} \sum_{x\in\Xalph} p(x|z) \cdot \log p(x|z)\,,
\end{equation}
reveals the uncertainty the adversary has after observing $z$ in the worst case (for the user). If there is any output value $z$ that leaks a lot of information about the real location $x$ (as it happens with every $z\neq z^*$ in the coin mechanism), this metric highlights it.

The metrics introduced throughout this section add additional dimensions to the privacy and quality loss evaluation procedure, revealing features not captured by the standard 2-dimensional approach based on the average error and the average loss. An example of this new characterization of privacy is shown in Fig.~\ref{fig:2D_3D_paradigms} where we show the performance of two mechanisms as a 3-D plot of $\PAE$, $\PCE$ and $\Qavg$, together with the projections in the $\PAE$-$\Qavg$ and $\PCE$-$\Qavg$ planes. In the next section, we show similar examples (albeit with 2-dimensional plots, for clarity) of particular location privacy preserving mechanisms. 

\begin{figure}[t]
  \centering
  \includegraphics[width=\columnwidth]{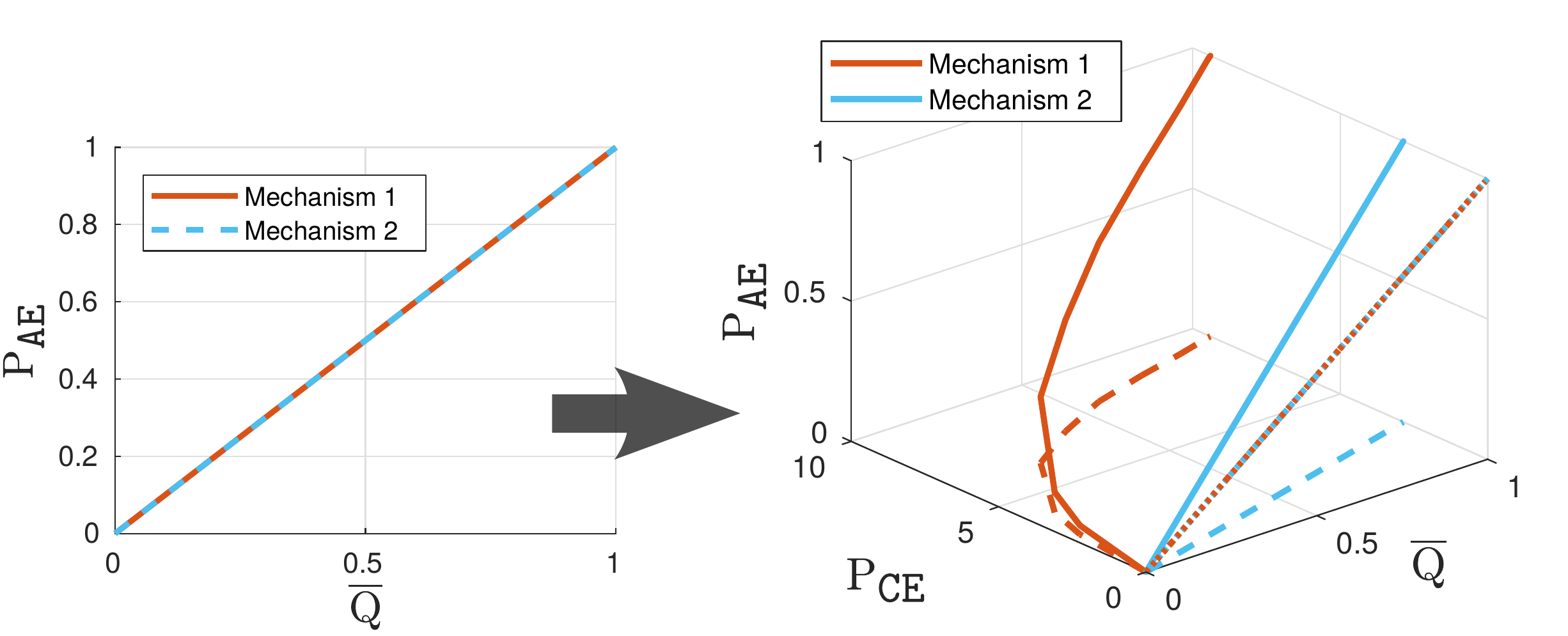}
  \caption{Two mechanisms that perform equally in the $\PAE$ vs.~$\Qavg$ plane, might behave very differently in practice. This is revealed by considering a multi-dimensional characterization of privacy.}
  \label{fig:2D_3D_paradigms}
\end{figure}

\section{Evaluation}
\label{sec:eval}

In this section, we assess the performance of different location privacy-preserving mechanisms with respect to different privacy notions. 
Our experiments confirm that relying on a single metric for evaluation can lead to an erroneous assessment of the privacy provided by a mechanism.
We divide our evaluation into two parts. First, we consider the continuous scenario introduced in Section~\ref{sec:sysmodel} and use real datasets to evaluate the performance of unbounded mechanisms, and of mechanisms that guarantee a maximum worst-case quality loss. Second, we consider a simpler scenario where the locations can only belong to a discrete set, and evaluate other defenses that have been proposed in the literature. All our experiments are performed using Matlab.\footnote{\url{https://www.mathworks.com/products/matlab.html}}

\subsection{Continuous Scenario}

\begin{figure*}[t]
  \begin{minipage}[b]{0.27\linewidth}
    \centering
    \includegraphics[width=\textwidth]{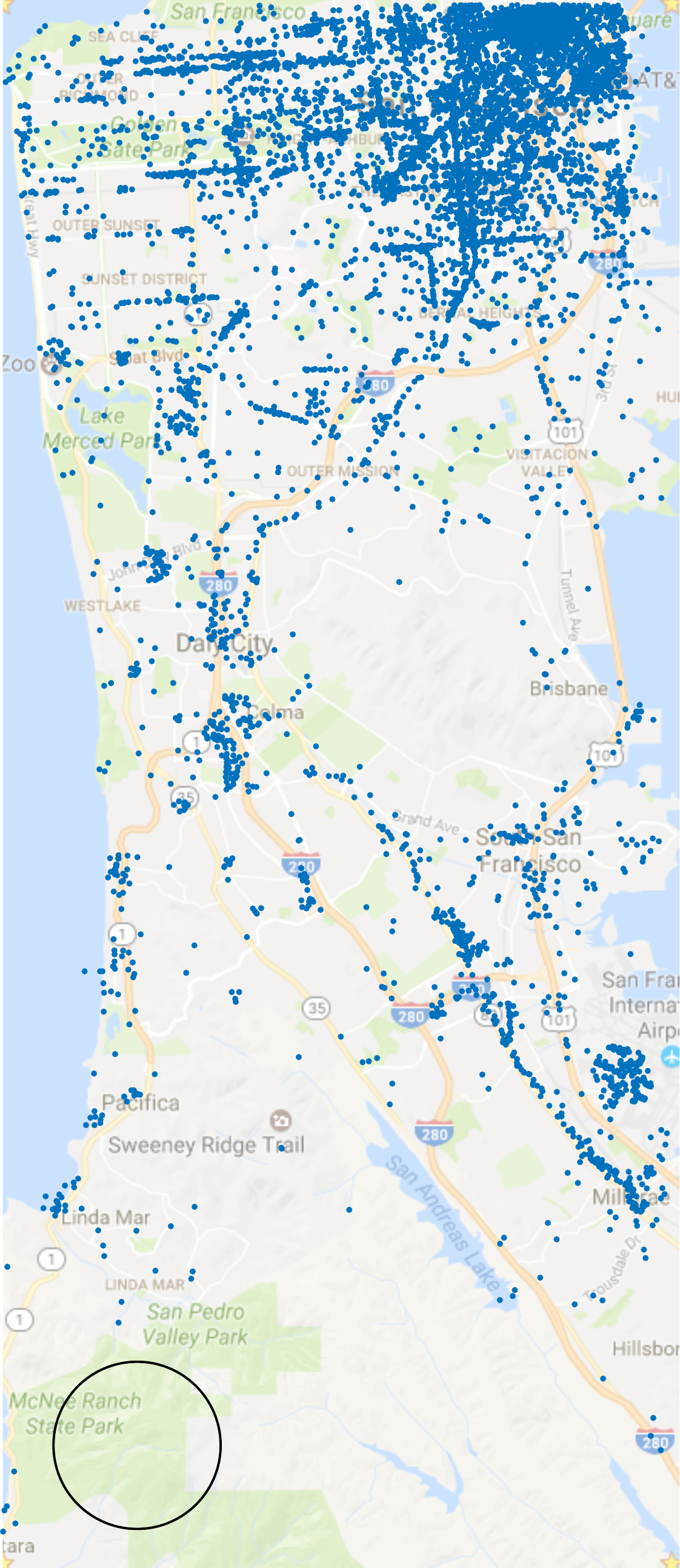}
    \caption{Points of interest in the San Francisco region taken from Gowalla dataset.}
    \label{fig:PoI_Gowalla}
  \end{minipage} \hfill
  \begin{minipage}[b]{0.68\linewidth}
    \centering
    \begin{minipage}[b]{\linewidth}
      \includegraphics[width=0.5\textwidth]{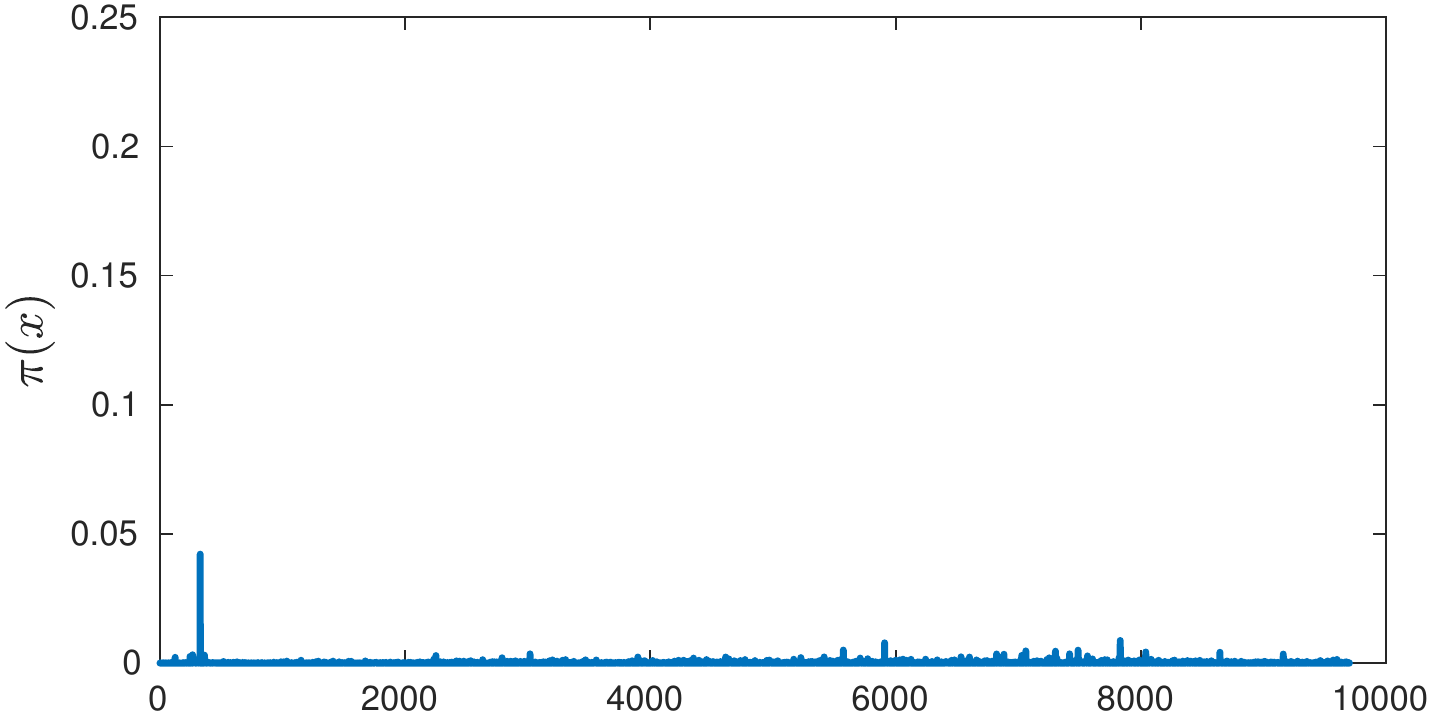}
      \includegraphics[width=0.5\textwidth]{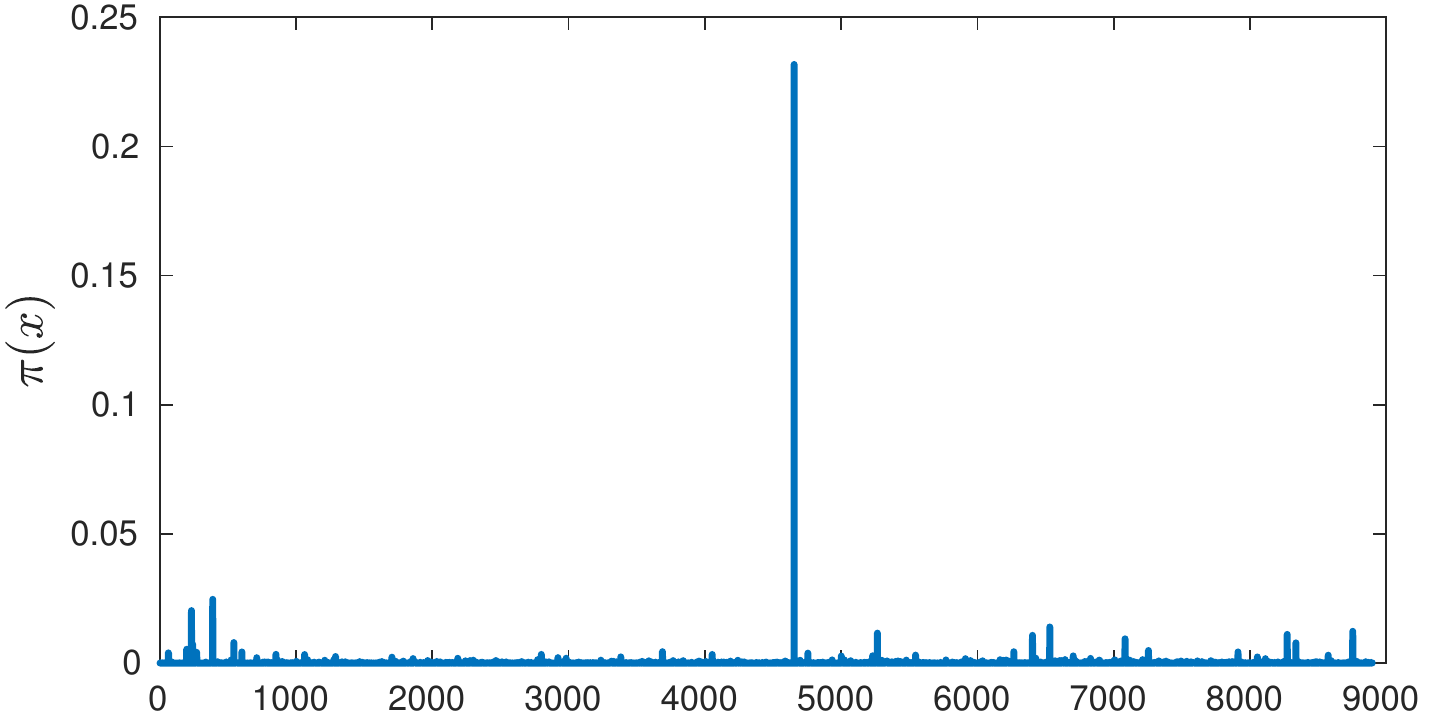}
      \caption{Priors $\pi$ for Gowalla (left) and Brightkite (right) datasets.}
      \label{fig:priors}
    \end{minipage}
    \begin{minipage}[b]{\linewidth}
      \includegraphics[width=0.48\textwidth]{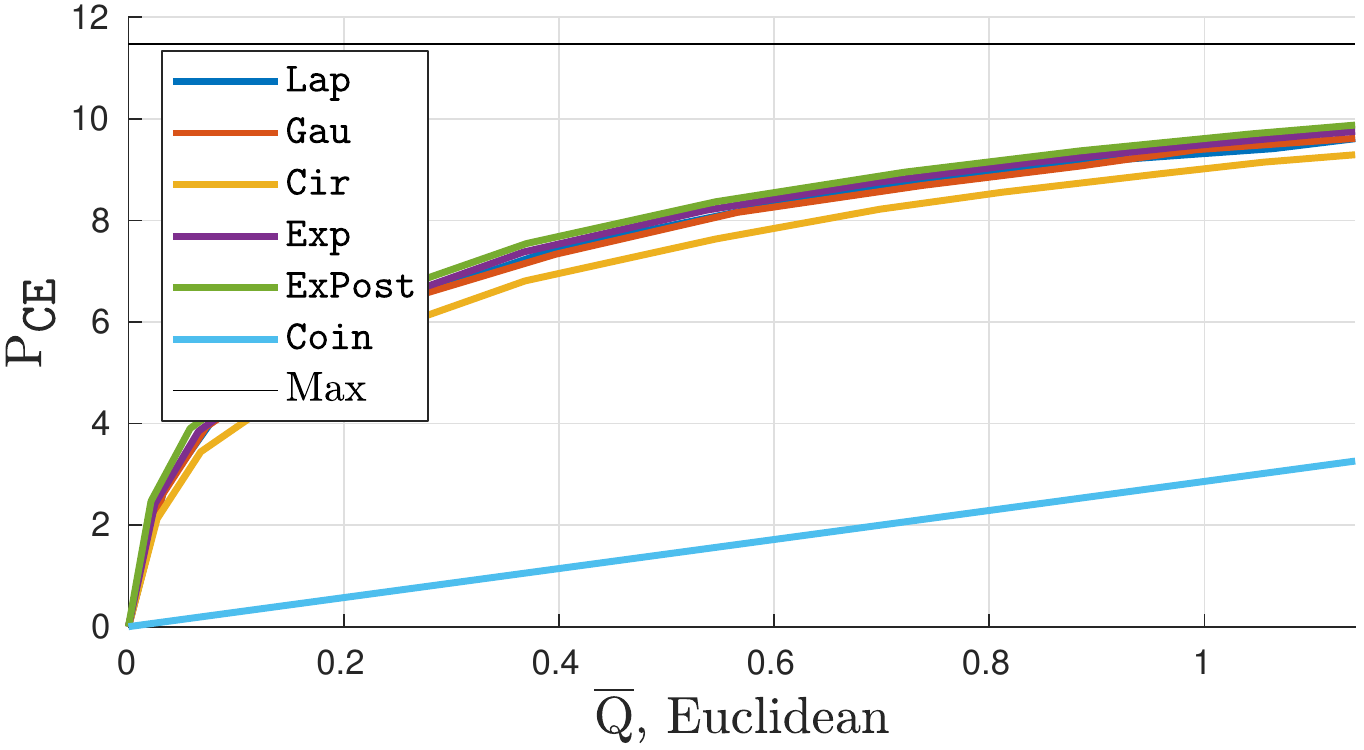} \hfill
      \includegraphics[width=0.48\textwidth]{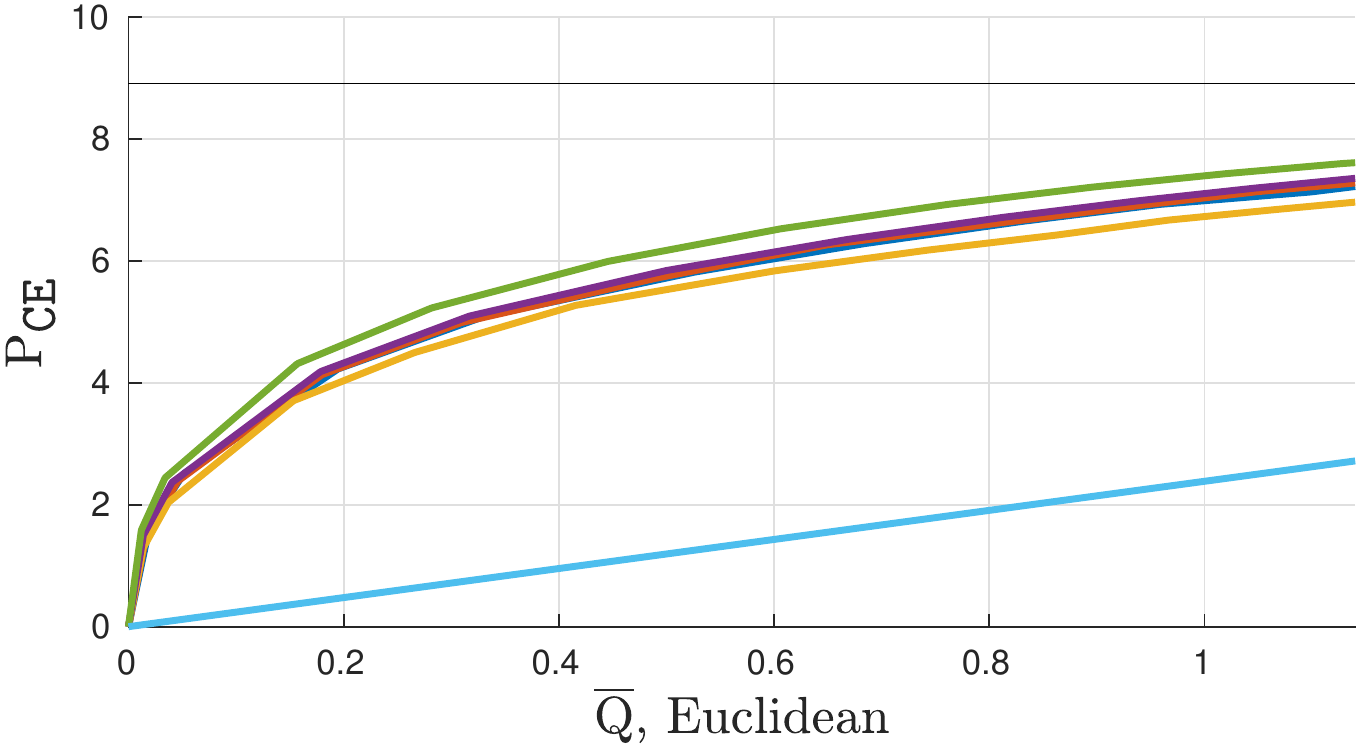}
      \caption{Conditional entropy vs.~average quality loss for Gowalla (left) and Brightkite (right) datasets.}
      \label{fig:Unbounded_CE}
    \end{minipage}
    \begin{minipage}[b]{\linewidth}
      \includegraphics[width=0.5\textwidth]{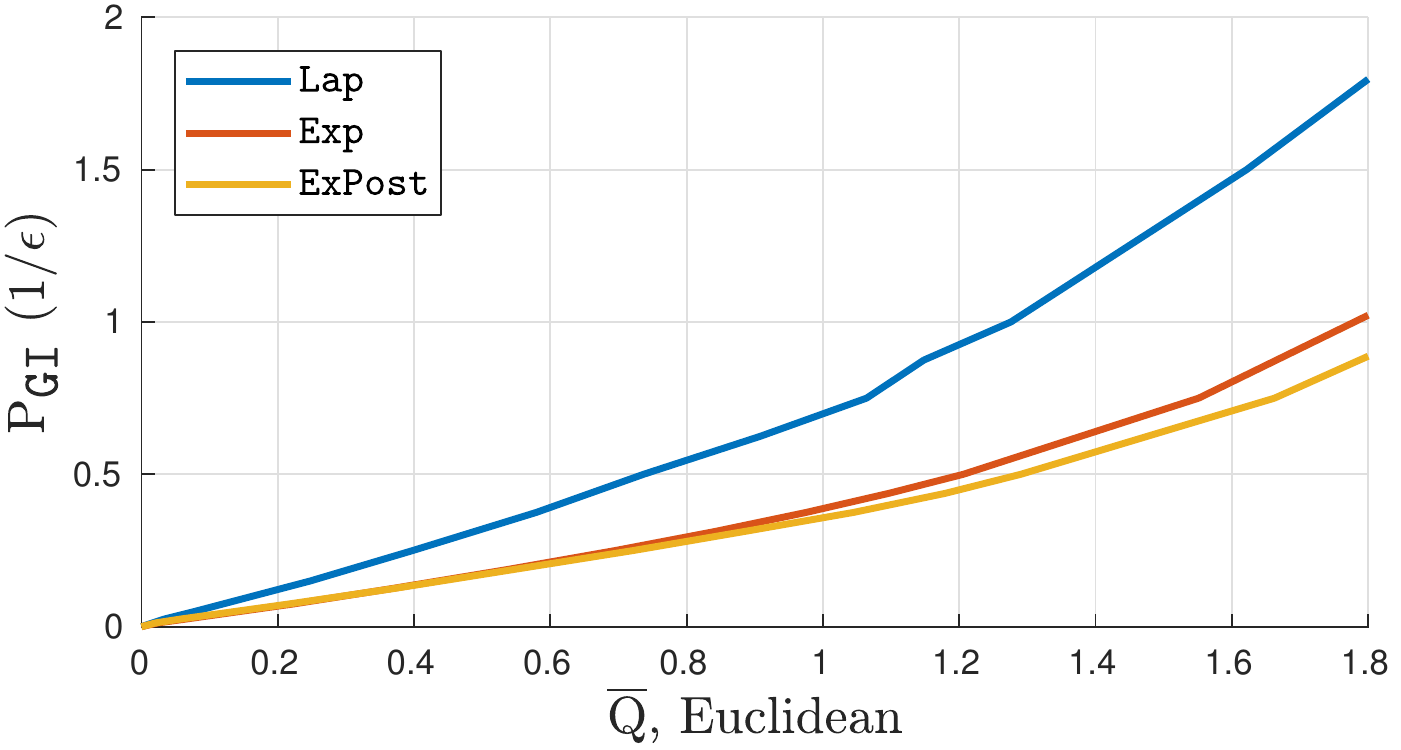}
      \includegraphics[width=0.5\textwidth]{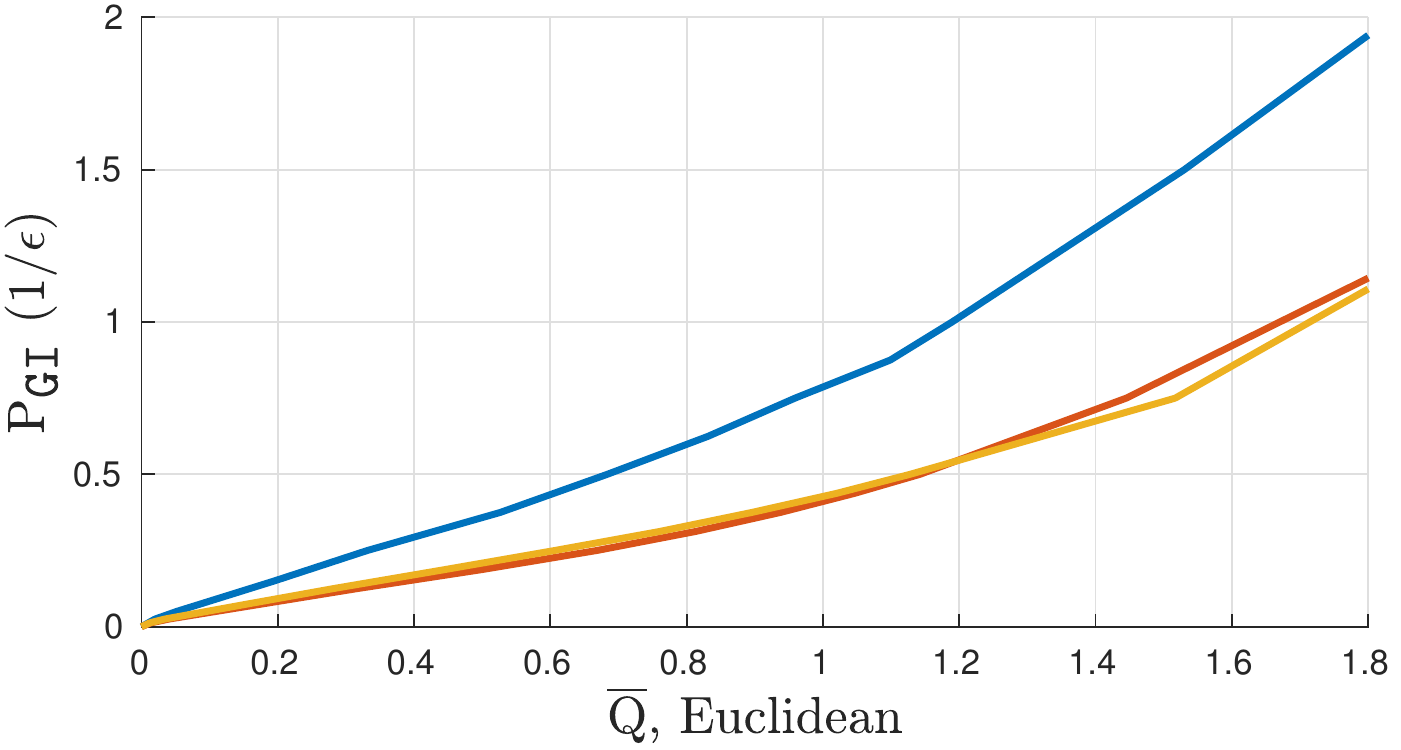}
      \caption{Geo-Ind Privacy $\PGI$ vs.~average quality loss for Gowalla (left) and Brightkite (right) datasets.}
      \label{fig:Unbounded_GI}
    \end{minipage}
  \end{minipage}
\end{figure*}

For this part of the evaluation, we consider that users are interested in querying about Points of Interest (PoIs) in a discrete set but they can report any point in $\RR$ to the server (see Section~\ref{sec:sysmodel}). We also consider that the adversary performs her estimation in $\RR$. We build the set of PoIs using the Gowalla\footnote{\url{https://snap.stanford.edu/data/loc-gowalla.html}} and Brightkite\footnote{\url{https://snap.stanford.edu/data/loc-brightkite.html}} real-world datasets. Following the approach of the finite domain evaluation in \cite{Practical2016}, we restrict the PoIs to a finite region of San Francisco area between the latitude coordinates ($37.5395$ and $37.7910$) and longitude ($-122.5153$ and $-122.3789$). We choose the San Francisco area because it contains a big density of points of interest and a large number of user check-ins, which ensures that the data is rich and representative of what one would expect from users living in the area. On the other hand, considering a finite region allows us to evaluate mechanisms whose computational cost increases with the number of points of interest, such as the exponential and exponential posterior mechanisms. We transform the PoIs into Cartesian coordinates in kilometers using the Haversine formula with respect to the center of the region. We end up with $|\Xalph|=9\,701$ PoIs for Gowalla and $|\Xalph|=8\,898$ for Brightkite, distributed in an area of roughly $28\text{km}\times 12\text{km}$. As example, the distribution of PoIs for Gowalla is shown in Fig.~\ref{fig:PoI_Gowalla}. For each dataset, we compute the prior $\pi$ by counting how many users check-in on each point of interest and normalizing the resulting histogram. The obtained priors are shown in Fig.~\ref{fig:priors}. We see that, in both datasets, there is a single point of interest $x_{\text{top}}$ that draws a lot of attention from the users ($\pi(x_{\text{top}})\approx0.04$ in Gowalla and $\pi(x_{\text{top}})\approx 0.23$ in Brightkite).

\begin{figure*}[t]
  \begin{minipage}[b]{0.37\linewidth}
    \centering
    \includegraphics[width=\textwidth]{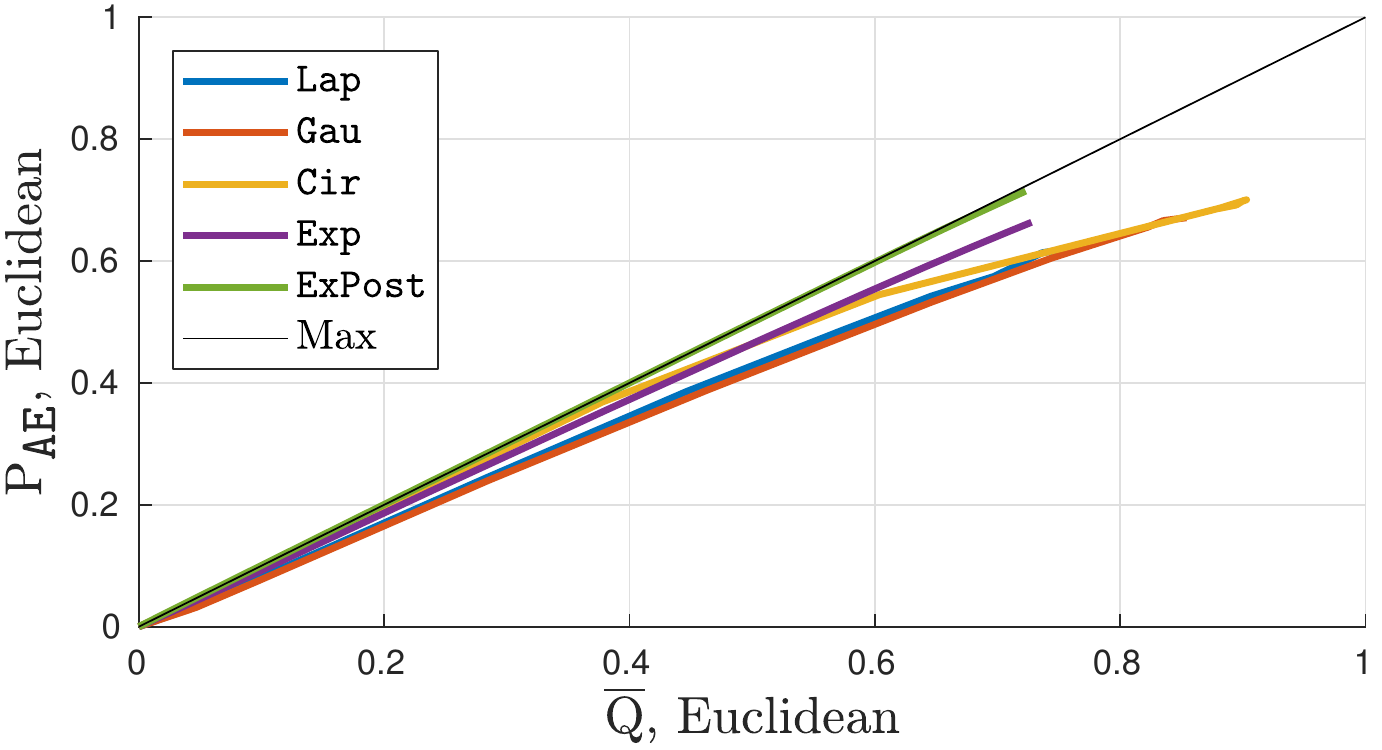}\\
    \includegraphics[width=\textwidth]{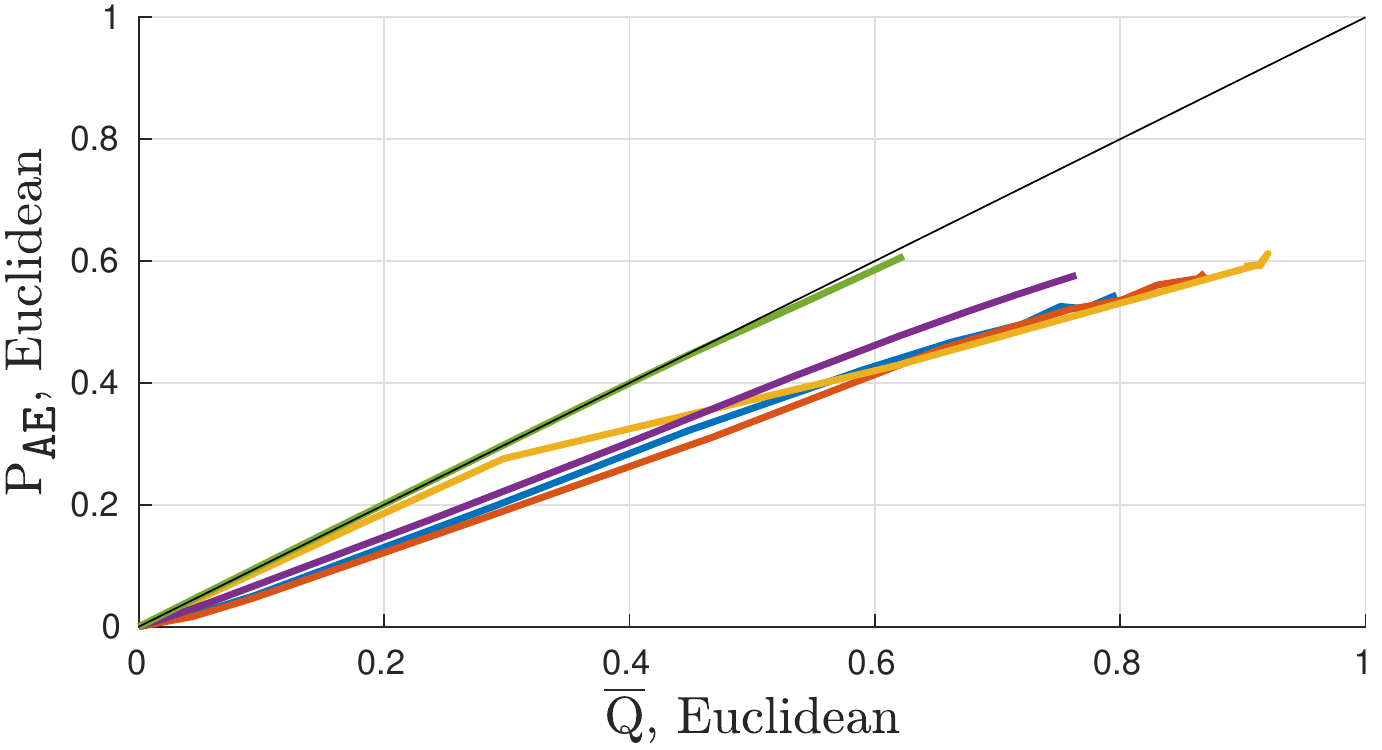}\\
    \caption{Average error vs.~average quality loss for different bounded mechanisms.}
    \label{fig:Bounded_L2}
  \end{minipage} \hfill
  \begin{minipage}[b]{0.37\linewidth}
    \centering
    \includegraphics[width=\textwidth]{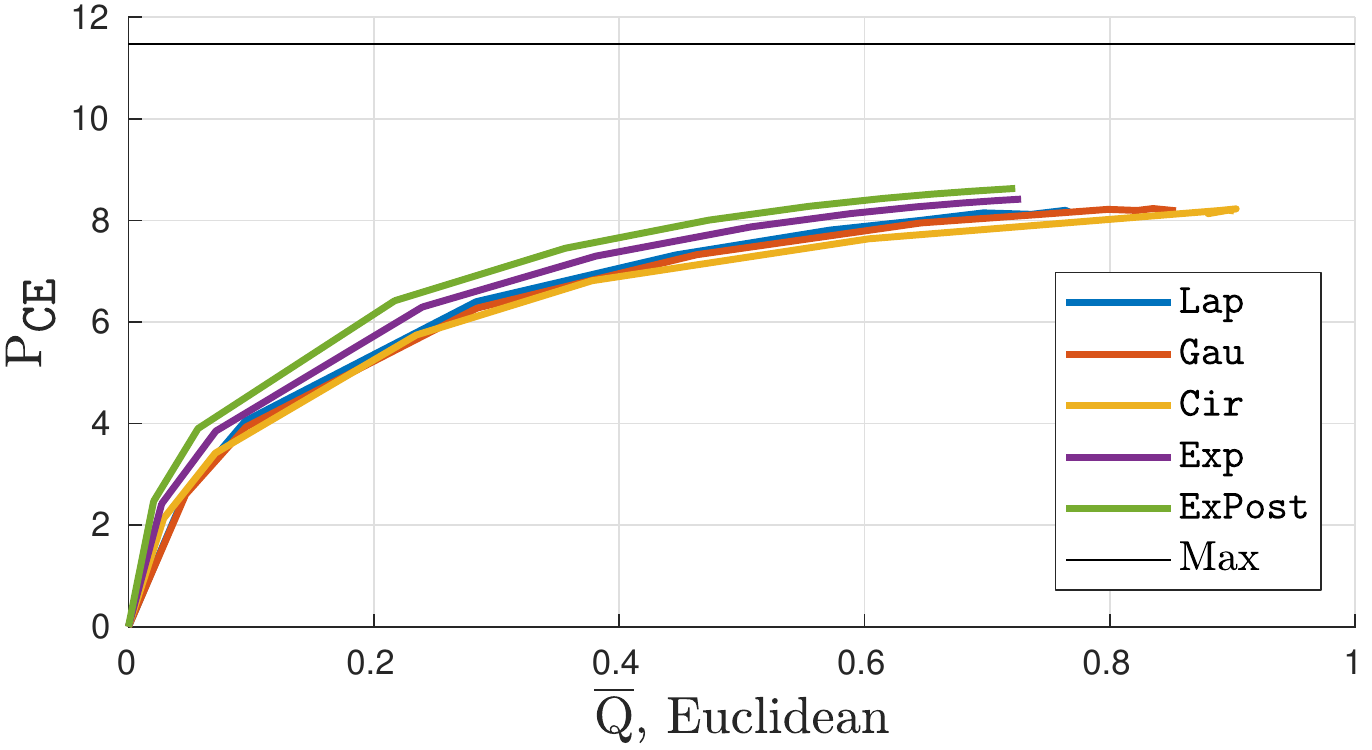}\\
    \includegraphics[width=\textwidth]{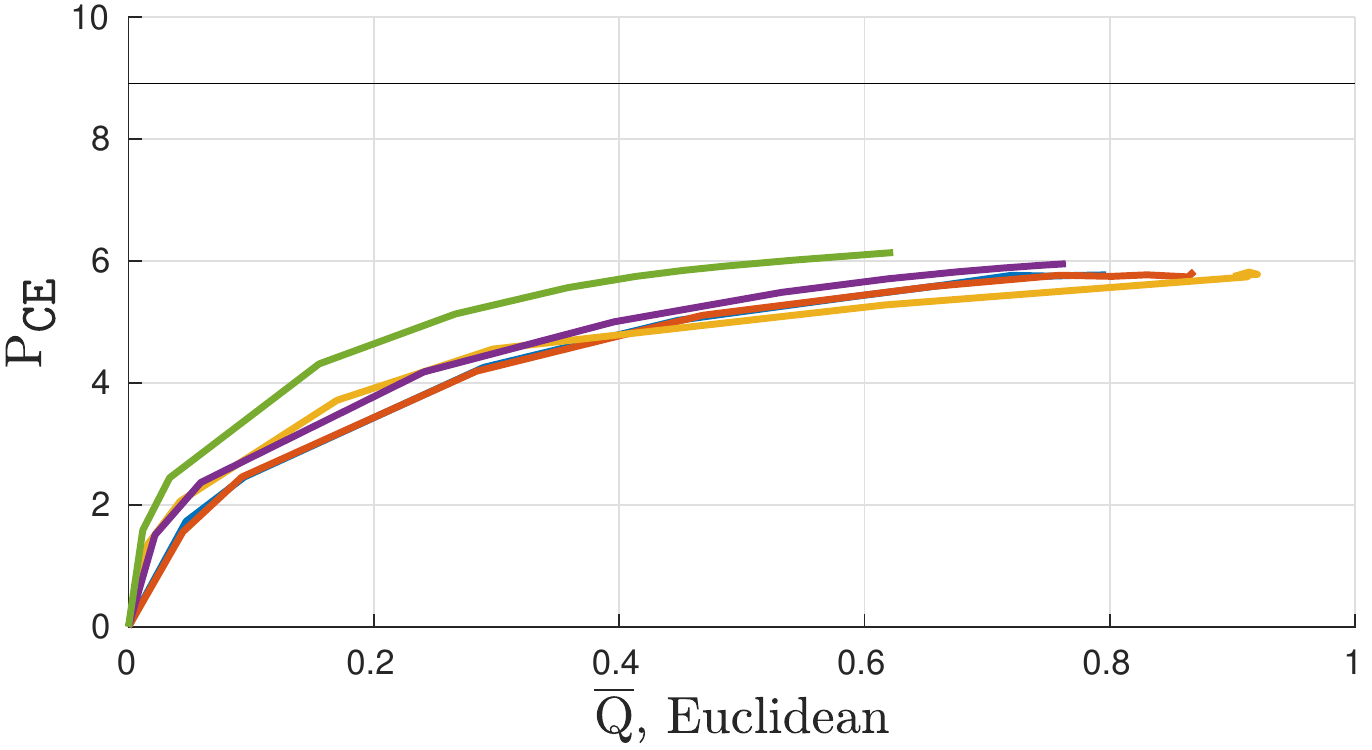}\\
    \caption{Conditional entropy vs.~average quality loss for different bounded mechanisms.}
    \label{fig:Bounded_CE}
  \end{minipage} \hfill
  \begin{minipage}[b]{0.2\linewidth}
    \centering
    \includegraphics[width=\textwidth]{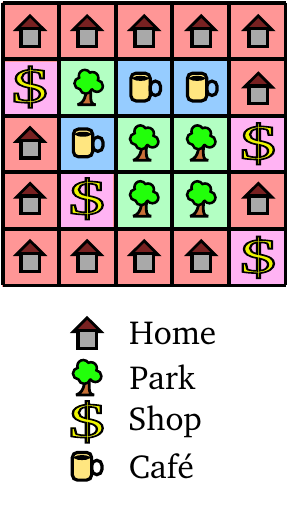}
    \caption{Semantic map of the discrete synthetic scenario.}
    \label{fig:scenario2}
  \end{minipage}
\end{figure*}

We evaluate six location-privacy preserving mechanisms, measuring their performance in terms of the average adversary error ($\PAE$), conditional entropy ($\PCE$) and geo-indistinguishability ($\PGI$) for different values of average quality loss ($\Qavg$). We always use the Euclidean distance for the quality loss $\dQ{x,z}=||x-z||_2$, and therefore the optimal remapping in \eqref{eq:optremapping} is obtained by computing the geometric median of the posterior. We compute this median using Weiszfeld's iterative method. We first evaluate the mechanisms without any bounds on their worst-case quality loss, and then imposing such constraint. 

The first three mechanisms we evaluate consist in adding noise in the continuous plane and then remapping them. We generate this noise in polar coordinates, sampling $\theta$ from a uniform distribution in ($0$, $2\pi$) and the radius $r$ from a distribution specified below. Since for these algorithms we cannot find a closed form expression for $f(z|x)$, we evaluate them empirically. To this end we sample $\pi$ to obtain $x$, we obtain $z$ adding the noise and performing the remapping, and then we measure privacy according to each metric. We report averages over $5\,000$ repetitions. These mechanisms are:
\begin{itemize}
 \item \textbf{[$\Lap$] Planar Laplacian noise} plus remapping \cite{Practical2016}. To generate the radius of the Laplace noise, we first sample $p$ uniformly in the interval $(0,1)$. Then, following~\cite{GeoInd2013CCS}, we set $r=\frac{1}{\epsilon}\left(W_{-1}\left(\frac{p-1}{e}\right)+1\right)$ where $W_{-1}$ is the $-1$ branch of the Lambert W function. We test different values of $\epsilon$ from $0.4\text{km}^{-1}$ to $40\text{km}^{-1}$, so that the average loss varies between $0.05$ and $5$km. 
 
 \item \textbf{[$\Gau$] Bi-dimensional Gaussian noise} plus remapping. To generate Gaussian noise, we sample the radius from a Rayleigh distribution, varying its mean from $0.05$ to $5$km.
 
 \item \textbf{[$\Cir$] Uniform circular noise} plus remapping. In this case, we sample the radius $r\in(0,R)$ from $f(r)=r/R^2$, where $R$ is the maximum radius of the circle, which we vary from $0.075$km to $7.5$km. This ensures an average loss that varies between $0.05$ and $5$km.
\end{itemize}

Second, we evaluate three mechanisms that output values in a discrete set, whose conditional probability density functions $f(z|x)$ can be computed arithmetically. This allows us to exactly determine their privacy and quality loss performance. These mechanisms are:
\begin{itemize}
 \item \textbf{[$\Coin$] The coin mechanism,} explained in Sect.~\ref{sec:coin}. We vary its average loss $\Qavg$ from $0$ to $2$. 
 \item \textbf{[$\Exp$] The Exponential mechanism} plus optimal remapping. The exponential mechanism is a general differential privacy technique that can be applied to provide geo-indistin\-guishability~\cite{dwork2008differential}. We set $\Zalph=\Xalph$ and set a parameter $b$, then compute the probability of mapping each input $x$ to an output $z$ as $\fdiscrete{z}{x}=a\cdot e^{-b\cdot \dQ{x,z}}$, where $a$ ensures that $\sum_{z\in\Zalph} \fdiscrete{z}{x}=1$. Then, we apply an optimal remapping to the outputs of this function and obtain $f(z|x)$. In the experiments, we vary $b$ from $0.4\text{km}^{-1}$ and $40\text{km}^{-1}$. 
 \item \textbf{[$\BA$] Exponential posterior mechanism}, proposed in Section~\ref{sec:BlahutArimoto}. In our experiments we set the discrete output alphabet of this algorithm to $\Zalph=\Xalph$.
\end{itemize}

\subsubsection{Results for unbounded mechanisms (no $\Qwc$ constraint)}
When the worst-case quality loss is not constrained, the optimal remapping ensures that all mechanisms are optimal in terms of average error, i.e., $\PAE=\Qavg$ (see Fig.~\ref{fig:Unbounded_L2} in the Appendix). This shows that the optimal remapping applied to \emph{any} mechanism achieves an optimal performance, whether it was Laplacian noise or a binary selection of a location such as $\Coin$, as we proved in Sect.~\ref{sec:PAEproblem}.

Figure~\ref{fig:Unbounded_CE} shows the mechanisms' performance in terms of conditional entropy $\PCE$, where the horizontal black line represents the maximum entropy achievable, i.e., the entropy of the prior $\pi$. Unsurprisingly, $\BA$ outperforms the rest of the mechanisms, as it is optimized with respect to this metric. The relative improvement of $\BA$ with respect to the other algorithms is slightly better in Brightkite than in Gowalla. This is due to the fact that in Brightkite the most frequent PoI is more popular than in Gowalla (see Fig.~\ref{fig:priors}), and thus performing well in this location is crucial to achieve a good overall privacy level in Brightkite. The iterative structure of $\BA$ allows this mechanism to refine its performance and be more effective than the rest of the mechanisms around this PoI. We note, however, that this refinement comes at the price of an increase in computational cost. Overall, all the mechanisms achieve a similar performance in terms of conditional entropy, except for the coin, that performs poorly. This reinforces the critique in Sect.~\ref{sec:coin}: even though $\Coin$ is optimal in terms of the average adversary error, measuring its performance in terms of conditional entropy reveals its privacy flaws.

Figure~\ref{fig:Unbounded_GI} shows the mechanisms' performance in terms of geo-indistinguishabi\-lity $\PGI(f)$ (we recall that $\PGI(f)=1/\epsilon$), only for $\Lap$, $\Exp$ and $\BA$, as these are the only algorithms that guarantee this property. As already seen in \cite{Practical2016}, the Laplace noise outperforms the exponential mechanism, and $\BA$ performs similar to the latter.

\begin{figure*}[t]
  \centering
  \subfloat[Average error (Euclidean)]{\includegraphics[width=.666\columnwidth]{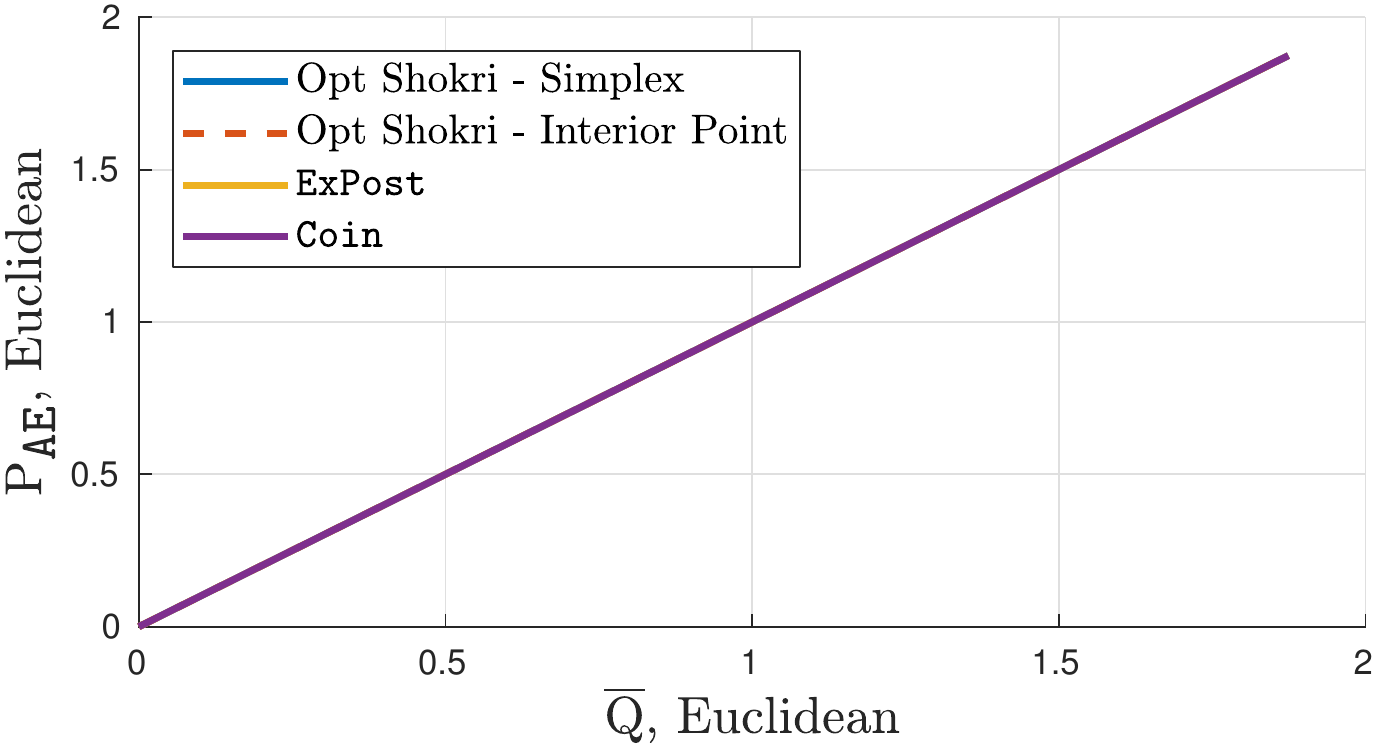}\label{fig:Simple1_L2}}
  \subfloat[Average error (semantic)]{\includegraphics[width=.666\columnwidth]{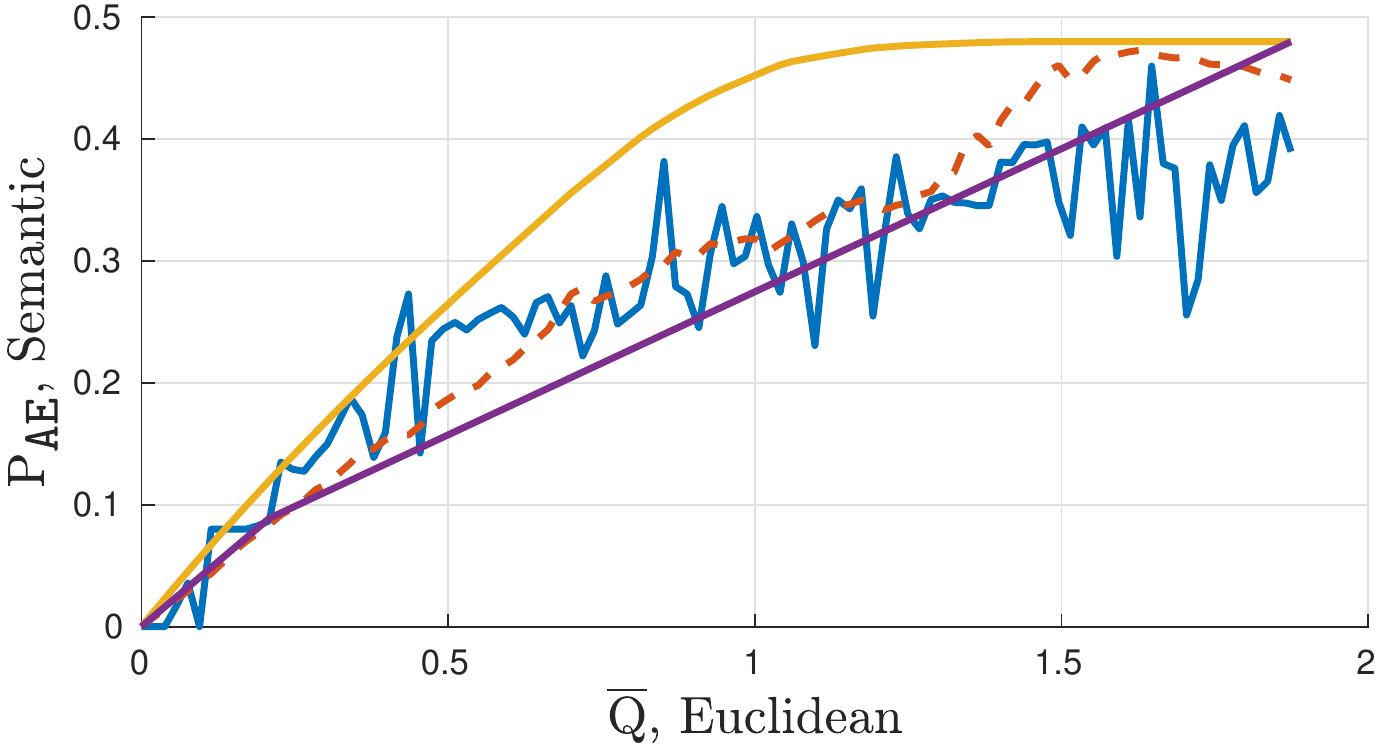}\label{fig:Simple1_S}}
  \subfloat[Conditional entropy]{\includegraphics[width=.666\columnwidth]{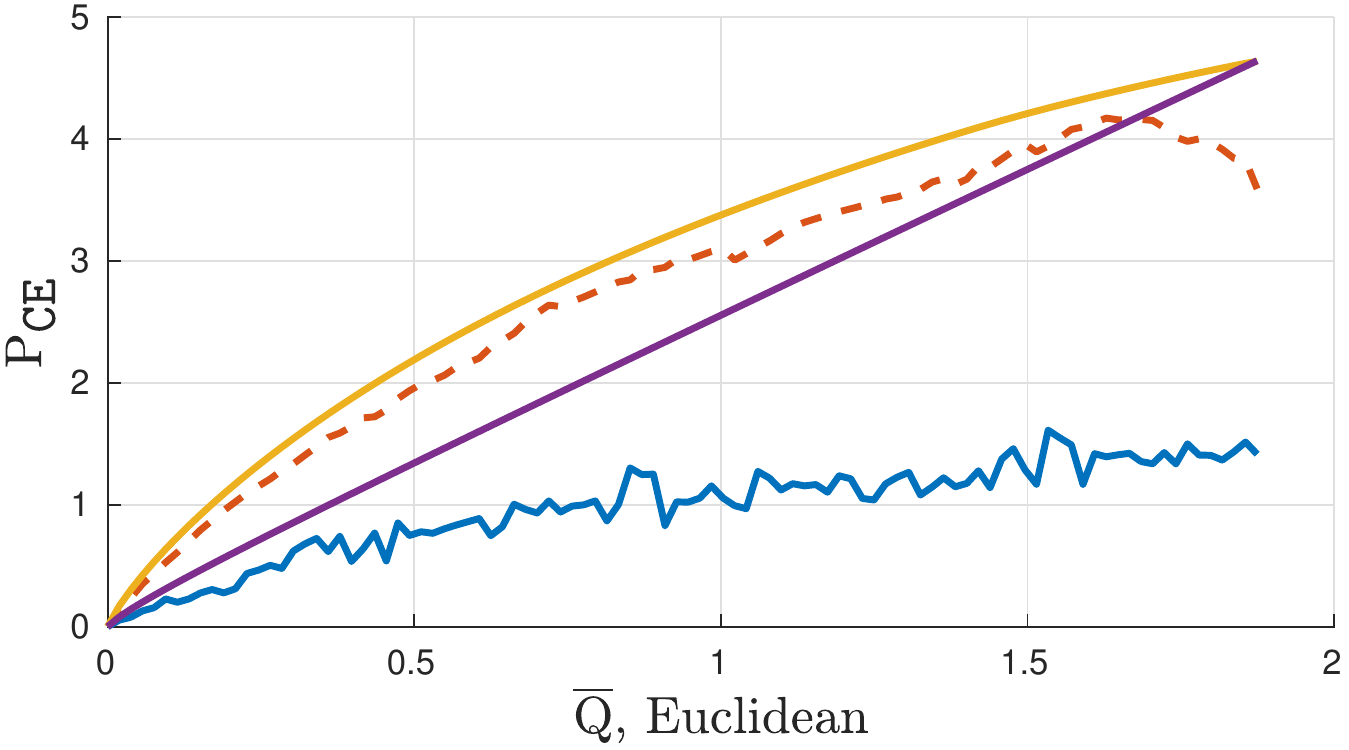}\label{fig:Simple1_CE}}
  \caption{Performance of Shokri et.~al's algorithm optimized for the adversary error in terms of Euclidean distance, compared to the coin mechanism and exponential posterior mechanism.}
  \label{fig:Simple1}
\end{figure*}

\begin{figure*}[t]
  \centering
  \subfloat[Average error (Euclidean)]{\includegraphics[width=.666\columnwidth]{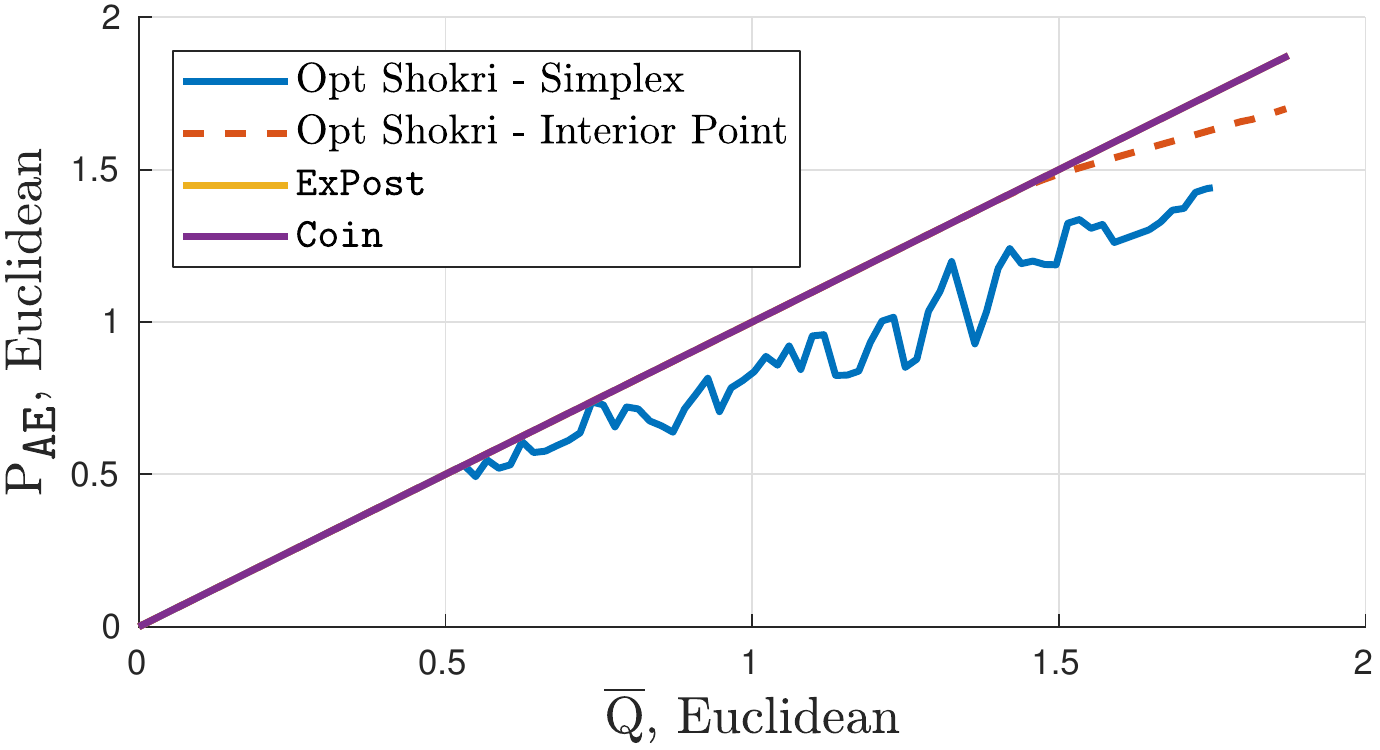}\label{fig:Simple2_L2}}
  \subfloat[Average error (semantic)]{\includegraphics[width=.666\columnwidth]{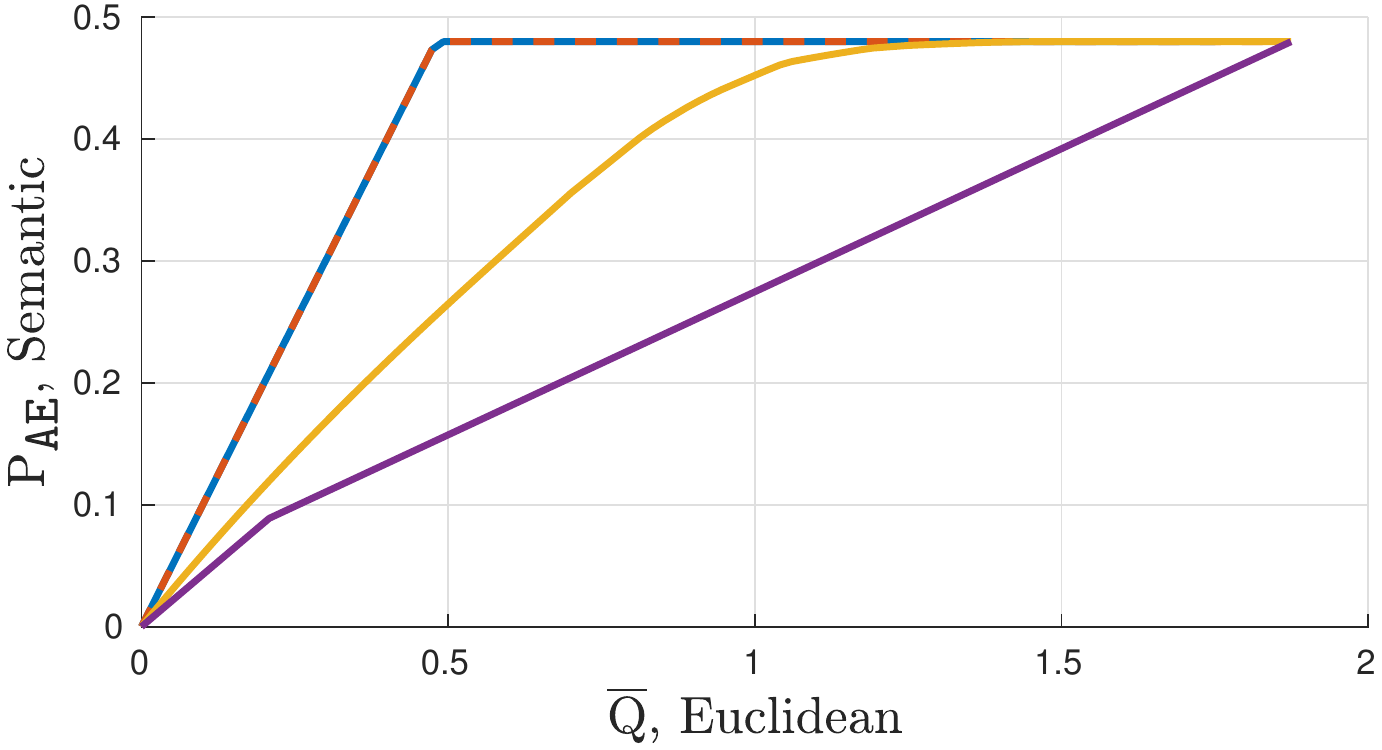}\label{fig:Simple2_S}}
  \subfloat[Conditional entropy]{\includegraphics[width=.666\columnwidth]{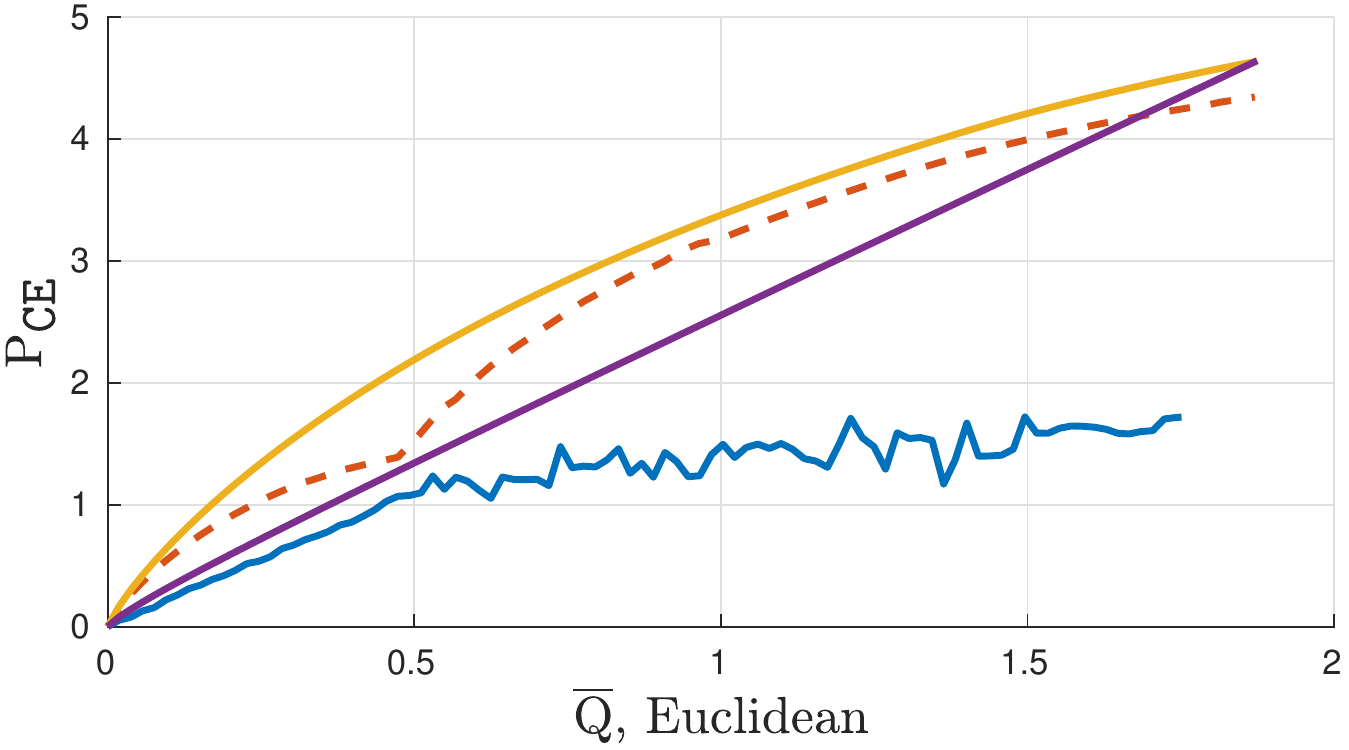}\label{fig:Simple2_CE}}
  \caption{Performance of Shokri et.~al's algorithm optimized for the adversary error in terms of semantic distance, compared to the coin mechanism and the exponential posterior mechanism.}
  \label{fig:Simple2}
\end{figure*}

\subsubsection{Results for bounded mechanisms}
We now impose a worst-case quality loss constraint of $\Qwcmax=1.5$km to the mechanisms (as a reference, we show a circle of radius $1.5$km in Fig.~\ref{fig:PoI_Gowalla}). To implement this constraint in the mechanisms, we truncate their output at $1.5$km and then apply the optimal remapping that respects the worst-case loss constraint. We do this by solving the problem in \eqref{eq:optremapping} with constraints. We do not evaluate the coin mechanism in this scenario, since it almost always violates the $\Qwc$ constraint. 

The results for the average adversary error as Euclidean distance are shown in Fig.~\ref{fig:Bounded_L2}. As expected, the mechanisms obtained after the remapping in this scenario are not necessarily optimal. We see that $\BA$ achieves a result that is close to the optimal mechanism in the unbounded case, while the other mechanisms achieve less average privacy. We conjecture this is due to the iterative nature of $\BA$, that refines its performance, while the other mechanisms are not optimized regarding the worst-case loss constraint. Again, $\BA$ achieves a wider advantage in Brightkite for the same reason explained above.

Figure~\ref{fig:Bounded_CE} shows the performance of the bounded mechanisms in terms of conditional entropy. The results are similar to those in the unbounded scenario, with $\BA$ outperforming the others with a slightly wider advantage in this case. As bounded mechanisms do not achieve geo-indistinguishability, we do not evaluate the performance with respect to this metric in this scenario. 

\subsection{Discrete scenario}

We now consider a simple synthetic scenario and evaluate the optimal mechanisms obtained following the method by Shokri et.~al \cite{Opt2012CCS}. In this work, the authors propose a linear program that finds a mechanism $f$ inside the polytope of optimal mechanisms for $\PAE$ given a constraint $\Qavg$, i.e., $f\in\Fopt{\Q}$. This approach is very versatile, as it can be computed for any pair of distance functions $\dP{\cdot}$ and $\dQ{\cdot}$. We set our synthetic scenario under the assumptions of that work: the input and output alphabets are discrete and identical $\Xalph=\Zalph$, and the adversary can only estimate locations inside that same alphabet $\Xestalph=\Xalph$. For simplicity, we consider that the set of locations in $\Xalph$ are the centers of the cells that make a $5\times 5$ square grid and assign a tag to each location that can be ``Home'', ``Park'', ``Shop'' or ``Caf{\'e}'', as depicted in Fig.~\ref{fig:scenario2}. We consider that the prior is uniform $\pi(x)=1/25\,,\,\, \forall x\in\Xalph$. We measure the point-wise loss as the Euclidean distance $\dQ{x,z}=||x-z||_2$ and consider two point-wise metrics of privacy: the Euclidean distance and a semantic distance defined as the Hamming distance between tags, i.e., $\dP{x,z}=0$ if $\text{Tag}(x)=\text{Tag}(z)$, and $\dP{x,z}=1$ otherwise. This metric is similar to the semantic metric in \cite{Semantics2016PETS}. The average error computed using this distance function represents the probability that an adversary guesses incorrectly the tag of $x$.

We evaluate $\BA$ and $\Coin$ together with the optimal mechanism proposed in \cite{Opt2012CCS}. For the latter, we solve the linear program to find optimal mechanisms in terms of maximizing $\PAE$ using the Euclidean distance (Fig.~\ref{fig:Simple1}) and the semantic distance we defined (Fig.~\ref{fig:Simple2}). As expected, the optimal mechanisms (Shokri et.~al) achieve the optimal privacy when evaluated using the adversary's error for which they are optimized (Figs.~\ref{fig:Simple1_L2} and \ref{fig:Simple2_S}), but not when evaluated against a different metric (Figs.~\ref{fig:Simple1_S} and \ref{fig:Simple2_L2}). $\BA$ and $\Coin$ achieve maximum privacy in terms of Euclidean distance, as before, but not in terms of semantic distance. This example emphasizes that optimizing a mechanism with respect to a privacy metric may provide very bad performance with respect to other privacy criteria. 

This experiment also shows another important idea: even though the solutions of the linear program both achieve approximately the same performance in terms of average error (optimal in Figs.~\ref{fig:Simple1_L2} and \ref{fig:Simple2_S}, suboptimal in Figs.~\ref{fig:Simple1_S} and \ref{fig:Simple2_L2}), they exhibit a radically different behavior in terms of conditional entropy. Indeed, using the mechanism computed with the simplex algorithm (a mechanism at a vertex of $\Fopt{\Q}$), the adversary has much less uncertainty about $x$ on average than if the user had implemented a mechanism from the interior of the polytope. This difference in entropy is also what allows us to tell apart a mechanism such as $\BA$ from $\Coin$. Note that the mechanism computed by solving the linear program with the simplex algorithm performs even worse than the coin in terms of entropy, illustrating the dangers of optimizing privacy in only one dimension.

\section{Conclusions}
\label{sec:conclusions}

In this work, we have demonstrated the problems of using a single privacy metric as indicator of the performance of location privacy preserving mechanisms. We have proven that there is more than one optimal protection mechanism in terms of maximizing the average adversary error for a given average quality loss, and that the family of mechanisms that fulfill such condition behave differently in terms of other privacy metrics. Thus, optimizing defenses with only one privacy metric in mind may lead to mechanisms that offer poor protection in other dimensions of privacy. To avoid selecting underperforming mechanisms we propose the use of complementary criteria to guide the choice. We provide two example auxiliary metrics: the conditional entropy and the worst-case loss. We propose an optimal mechanism with respect to the former, and provide means to implement mechanisms according to the latter.

We evaluate the mechanisms, comparing them to previous work, on two real datasets. Our experiments confirm two important ideas: first, that we cannot find a mechanism that performs optimally with respect to every privacy metric. Second, that even if a mechanism performs well in a particular metric it does not imply that it is necessarily beneficial for the user. Our findings reveal the need to take a step back in mechanism design to integrate privacy as a multi-dimensional notion, in order to avoid solutions that provide a false perception of privacy.

\appendix
\section{Appendix}

\subsection{Proof of Theorem~\ref{theo:optmech}}
In order to prove this result, first notice that, when $\dP{\cdot}\equiv\dQ{\cdot}$, the quality loss $\Qavg$ is an upper bound of privacy $\PAE$:
\begin{align}
  \PAE(f,\pi)&=\int_{\RR} \min_{\hat{x}\in\RR}\left\{ \sum_{x\in\Xalph} \pi(x) \cdot f(z|x) \cdot \dP{x,\hat{x}} \right\} dz \nonumber\\
      &\leq \int_{\RR} \left\{ \sum_{x\in\Xalph} \pi(x) \cdot f(z|x) \cdot \dQ{x,z} \right\} =\Qavg(f,\pi)\,,  \label{eqapp:bound}
\end{align}

Now, assume that $f'=f\circ g$, and therefore
\begin{equation}
 z = \underset{z'\in\RR}{\text{argmin }} \sum_{x\in\Xalph} \pi(x)\cdot f'(z|x)\cdot \dQ{x,z'}\,.
\end{equation}

The optimal adversary estimation of $x$ given $z$ given in \eqref{eq:advest} can be written as
\begin{equation}
 \hat{x}(z)=\underset{\hat{x}\in\RR}{\text{argmin }} \sum_{x\in\Xalph} \pi(x)\cdot f'(z|x)\cdot \dP{x,\hat{x}}\,.
\end{equation}

We see that since $\dP{\cdot}\equiv\dQ{\cdot}$ the optimal adversary estimation is doing nothing, i.e., $\hat{x}(z)=z$. This implies that $\PAE(f',\pi)=\Qavg(f',\pi)$, and since we have achieved the upper bound on privacy given in \eqref{eqapp:bound}, $f'$ is optimal.

\subsection{Geo-indistinguishability of the posterior exponential mechanism.}
We recall that the geo-indistinguishability guarantee requires the following condition to be fulfilled (now written for discrete mechanisms, where $\fdiscrete{z}{x}$ denotes the probability of reporting $z$ when the original location is $x$):
\begin{equation} \label{eqapp:geoindcondition}
 \fdiscrete{z}{x} \leq e^{\epsilon\cdot \dP{x,x'}} \cdot \fdiscrete{z}{x'}\,,\quad\forall x,x'\in\Xalph,\,z\in\Zalph\,,
\end{equation}
where $\dP{x,x'}$ is the Euclidean distance.

The last iteration of the $\BA$ algorithm in \ref{sec:BlahutArimoto} returns a mechanism that can be written for a particular input $x$ and output $z$ as
\begin{equation}
 \fdiscrete{z}{x}=\begin{cases}
         \frac{P_Z(z)\cdot e^{-b \cdot \dQ{x,z}}}{\sum_{z'\in\Zalph} P_Z(z')\cdot e^{-b \cdot \dQ{x,z'}} }\,&\text{if }P_Z(z)>0\,,\\
         0\,,&\text{if }P_Z(z)=0\,.
        \end{cases}
\end{equation}
where $\dQ{x,z}$ is the Euclidean distance. In the second case, the geo-indistinguishability guarantee is trivially achieved since given any pair of input locations $x,x'\in\Xalph$, $\fdiscrete{z}{x}=\fdiscrete{z}{x'}=0$. For the first case, we use the triangular inequality $\dQ{x,z}+\dQ{x',z}\geq\dQ{x,x'}$ to write
\begin{align}
 \fdiscrete{z}{x}=&\frac{P_Z(z)\cdot e^{-b \cdot \dQ{x,z}}}{\sum_{z'\in\Zalph} P_Z(z')\cdot e^{-b \cdot \dQ{x,z'}} }\\
       \leq&\frac{P_Z(z)\cdot e^{b \cdot \dQ{x,x'}} \cdot e^{-b \cdot \dQ{x',z}}}{\sum_{z'\in\Zalph} P_Z(z')\cdot e^{-b \cdot \dQ{x,z'}} }\\
       \leq&\frac{P_Z(z)\cdot e^{b \cdot \dQ{x,x'}} \cdot e^{-b \cdot \dQ{x',z}}}{\sum_{z'\in\Zalph} P_Z(z')\cdot e^{-b \cdot \dQ{x,x'}} \cdot e^{-b \cdot \dQ{x',z'}} }\\
       =&\frac{P_Z(z) \cdot e^{-b \cdot \dQ{x',z}}}{\sum_{z'\in\Zalph} P_Z(z')\cdot e^{-b \cdot \dQ{x',z'}} } \cdot e^{2b \cdot \dQ{x,x'}}\\
       =&e^{2b \cdot \dQ{x,x'}} \cdot \fdiscrete{z}{x'} \,,
\end{align}
which satisfies the geo-indistinguishability for $\epsilon=2b$ or $\PGI=1/2b$, if $\dQ{\cdot}$ is the Euclidean distance. This concludes the proof.

\subsection{Performance of the unbounded mechanisms in terms of the average error}
When the average error (Euclidean) and the average quality loss (Euclidean) are used to evaluate the performance of the mechanisms described in Section~\ref{sec:eval}, we achieve the trivial result $\PAE=\Qavg$. This is shown in Fig.~\ref{fig:Unbounded_L2} for completeness.

\begin{figure}[t]
  \centering
  \subfloat[Gowalla]{\includegraphics[width=.8\columnwidth]{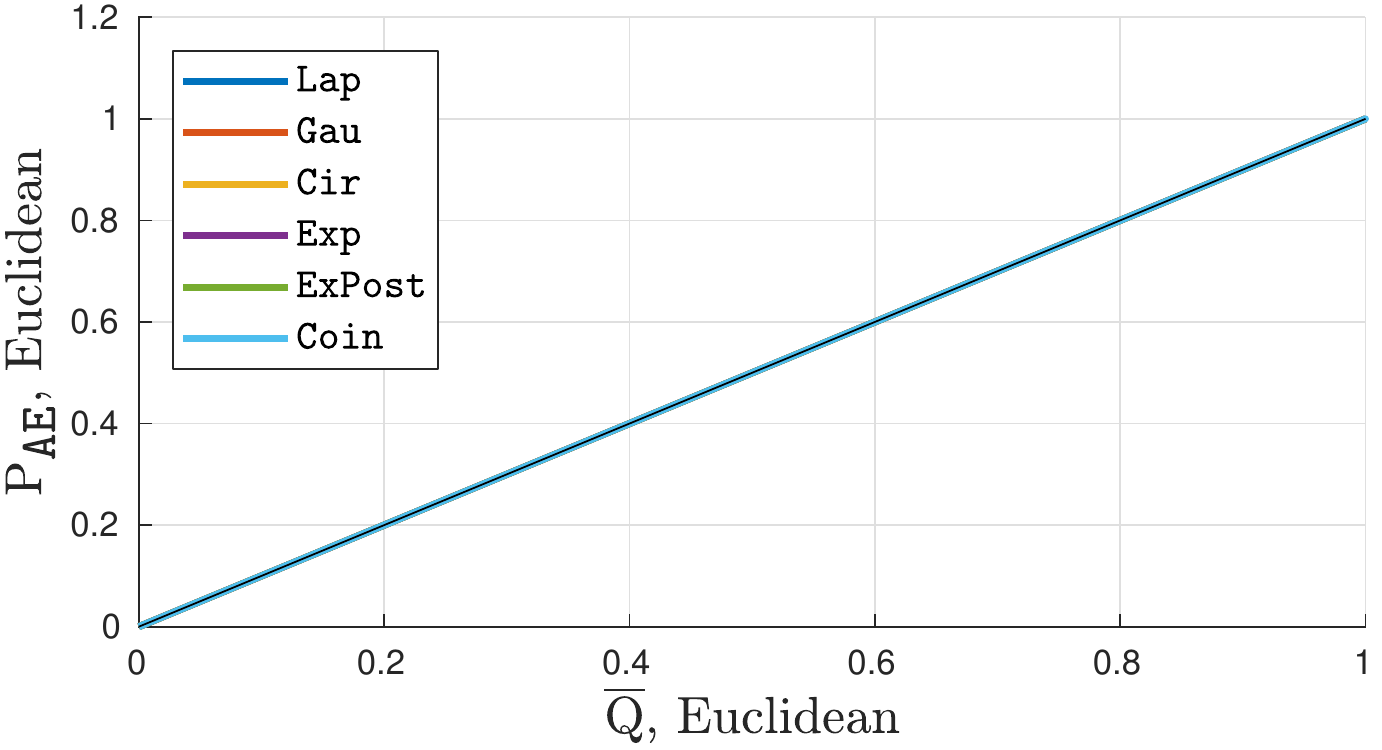}} \\
  \subfloat[Brightkite]{\includegraphics[width=.8\columnwidth]{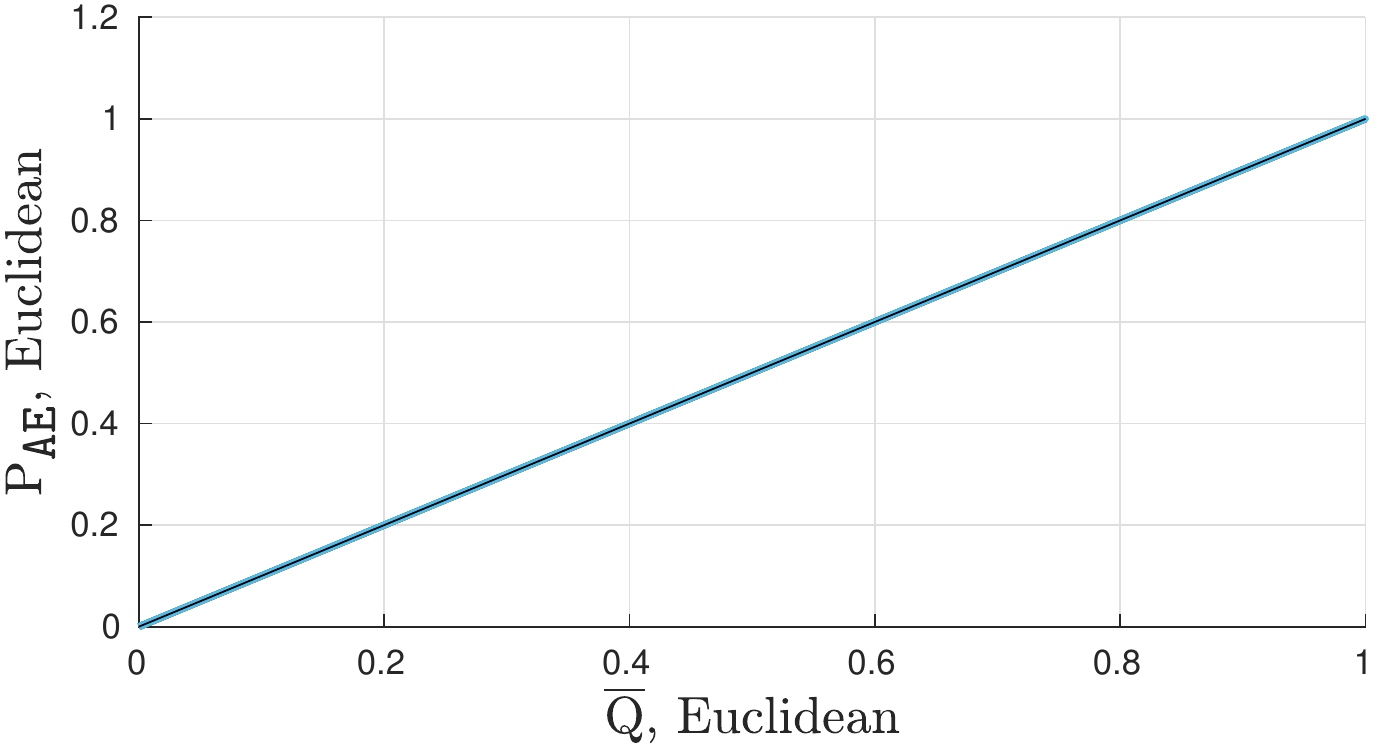}}
  \caption{Average error vs.~average quality loss for different unbounded mechanisms.}
  \label{fig:Unbounded_L2}
\end{figure}

\begin{acks}
This work is partially supported by EU H2020-ICT-10-2015 NEXTLEAP (GA n 688722), the Agencia Estatal de Investigaci{\'o}n (Spain) and the European Regional Development Fund (ERDF) under projects WINTER (TEC2016-76409-C2-2-R) and COMONSENS (TEC2015-69648-REDC), and by the Xunta de Galicia and the European Union (European Regional Development Fund - ERDF) under projects Agrupaci{\'o}n Estrat{\'e}xica Consolidada de Galicia accreditation 2016-2019 and Red Tem{\'a}tica RedTEIC 2017-2018. 
Simon Oya is funded by the Spanish Ministry of Education, Culture and Sport under the FPU grant.
\end{acks}

\bibliographystyle{ACM-Reference-Format}
\bibliography{bibliolocation}

\end{document}